\newcommand{\qed}{\hfill \ensuremath{\Box}}
\newtheorem{lemmx}{Lemma}
\newtheorem{thmx}{Theorem}
\newtheorem{corollary}{Corollary}
\newenvironment{proof}{{\bf Proof:}}{\qed}
\begin{document}

\title{Deterministic algorithms for skewed matrix products}
%\titlerunning{Deterministic algorithms for skewed matrix products} %optional, in case that the title is too long; the running title should fit into the top page column

\author{Konstantin Kutzkov}
%\author[2]{Joan R. Access}
\affil{IT University Copenhagen\\
  Denmark}
\affil{kutzkov@gmail.com}
\date{}

\maketitle

\begin{abstract}

Recently, Pagh presented a randomized approximation algorithm for the multiplication of real-valued matrices building upon work for detecting the most frequent items in data streams. We continue this line of research and present new {\em deterministic} matrix multiplication algorithms. 

Motivated by applications in data mining, we first consider the case of real-valued, nonnegative $n$-by-$n$ input matrices $A$ and $B$, and show how to obtain a deterministic approximation of the weights of individual entries, as well as the entrywise $p$-norm, of the product $AB$. The algorithm is simple, space efficient and runs in one pass over the input matrices. For a user defined $b \in (0, n^2)$ the algorithm runs in time $O(nb + n\cdot\text{Sort}(n))$ and space $O(n + b)$ and returns an approximation of the entries of $AB$ within an additive factor of $\|AB\|_{E1}/b$, where $\|C\|_{E1} = \sum_{i, j} |C_{ij}|$ is the entrywise 1-norm of a matrix $C$ and $\text{Sort}(n)$ is the time required to sort $n$ real numbers in linear space. Building upon a result by Berinde et al. we show that for skewed matrix products (a common situation in many
real-life applications) the algorithm is more efficient and achieves better approximation guarantees than previously known randomized algorithms.

When the input matrices are not restricted to nonnegative entries, we present a new deterministic group testing algorithm detecting nonzero entries in the matrix product with large absolute value. The algorithm is clearly outperformed by randomized matrix multiplication algorithms, but as a byproduct we obtain the first $O(n^{2 + \varepsilon})$-time deterministic algorithm for matrix products with $O(\sqrt{n})$ nonzero entries.     
\end{abstract}
\section{Introduction}

The complexity of matrix multiplication is one of the fundamental problems in theoretical computer science. Since Strassen's sub-cubic algorithm for matrix multiplication over a ring from the late 1960's~\cite{strassen_mm}, the topic has received considerable attention, see~\cite{mm_history} for a historical overview on the subject. It is conjectured that matrix multiplication admits an algorithm running in time $O(n^{2 + \varepsilon})$ for any $\varepsilon > 0$. For more than 20 years the record holder was the algorithm by Coppersmith and Winograd~\cite{copwin_mm} running in time $O(n^{2.376})$. Recently two results improving on~\cite{copwin_mm} were announced. In his PhD thesis Stothers~\cite{stothers_mm} presents a refinement of the Coppersmith-Winograd algorithm running in time $O(n^{2.3737})$ and Vassilevska Williams~\cite{virgi_mm} developes a general framework for obtaining a tighter upper bound on the complexity of the Coppersmith-Winograd algorithm. The latter yields the best known bound of $O(n^{2.3727})$.  Several algorithms computing exactly the matrix product for special classes have been designed. For example, algorithms with running time better than  $O(n^{2.3727})$ are known for Boolean matrix multiplication with sparse output~\cite{lingas_mm} or for the case when the input or output matrices are sparse~\cite{amossen_pagh_mm,yuster_zwick_mm}. In a recent work Iwen and Spencer~\cite{iwen_spencer_mm} present a new class of matrices whose product can be computed in time $O(n^{2+\varepsilon})$ by a deterministic algorithm: namely when the output matrix is guaranteed to contain at most $n^{0.29462}$ non-zero entries in each column (or by symmetry row). All improved algorithms use as a black-box the algebraic matrix multiplication algorithm which, unfortunately, is only of theoretical importance. It uses sophisticated algebraic approaches resulting in large constants hidden in the big-Oh notation and does not admit an efficient implementation. This motivates the need of simple  ``combinatorial-like" algorithms. 
\\\\
{\bf Approximate matrix multiplication.}
The first approximation algorithm with rigorously understood complexity by Cohen and Lewis is based on sampling~\cite{cohen_lewis_mm}. For input matrices with nonnegative entries they show a concentration around the estimate for individual entries in the product matrix with high probability. %However, their algorithms is based on a random walk in a bipartite graph of size $\Theta(n^2)$ and is thus not suitable for handling large matrices when not enough working memory is available. 

The amount of data to be handled has been growing at a faster rate than the available memory in modern computers. Algorithms for massive data sets, where only sequential access to the input is allowed, have become a major research topic in computer science in the last decade. Drineas et al.~\cite{drineas_et_al_mm} first recognized the need for memory-efficient methods for matrix multiplication when access to single columns and rows of the input matrices is possible. They present a randomized algorithm for approximating the product of two matrices based on sampling. The complexity of the algorithm as well as its accuracy depend on user-defined parameters. The algorithm is ``pass-efficient" since columns and rows of the input matrices are sequentially loaded into memory. The obtained approximation guarantees are expressed in terms of the Frobenius norm of the input matrices and the user-defined parameters but their method does not result in a strong guarantee for individual entries in the output matrix. Sarl\'os~\cite{sarlos_mm} observed that instead of sampling rows and columns one can use random projections to obtain a sketch of the matrix product. A notable difference to~\cite{drineas_et_al_mm} is that by sketching one obtains an additive error for each individual entry depending on the 2-norm of the corresponding row and column vector in the input matrices. 

Recently Pagh~\cite{pagh_mm} introduced a new randomized approximation algorithm. Instead of sketching the input matrices and then multiplying the resulting smaller matrices, we treat the product as a stream of outer products and sketch each outer product. Using Fast Fourier Transformation in a clever way, Pagh shows how to efficiently adapt the {\sc Count-Sketch} algorithm~\cite{count_sketch} to an outer product. The algorithm runs in one pass over the input matrices and provides approximation guarantees in terms of the Frobenius norm of their product.
\\\\
{\bf Our contribution.}
\begin{itemize}

\item A new algorithm for the case where the input matrices consist of nonnegative entries only. %building upon the {\sc Frequent} algorithm~\cite{demaine_et_al,karp_et_al,misra_gries}. 
This is the first nontrivial deterministic approximation algorithm for the multiplication of nonnegative matrices in a streaming setting. Motivated by practical applications, we analyze the approximation guarantee and the algorithm complexity under the assumption that the entries adhere to Zipfian distribution. We compare it to previously known randomized algorithms and show that for certain natural settings it is more efficient and achieves better approximation guarantees. %On the way we present a deterministic algorithm for selecting the median in a sumset $A+B =\{a+b: a\in A, b \in B\}$, which might be of independent interest. %show the new algorithm is more efficient than Pagh's randomized approximation algorithm~\cite{pagh_mm} and when the entries in the output matrix adhere to a skewed distribution, it achieves better approximation guarantees. %Using results from a recent work of Berinde et al.~\cite{berinde_et_al} we show that for skewed distributions the achieved error guarantee is better than the one given in~\cite{pagh_mm}. %More concretely, we compare the  error guarantees when the entries of the output matrix adhere to Zipfian distribution.  In two passes our algorithm detects entries with values significantly larger than the average value.

\item We present a new matrix multiplication algorithm for arbitrary real-valued input matrices by adapting the group testing algorithm for streams with updates in the turnstile model outlined in~\cite{muthu_survey} for detecting the entries with large absolute value in matrix product. As a byproduct we obtain the first deterministic algorithm running in $O(n^{2+\varepsilon})$ steps for matrix products with $O(\sqrt{n})$ nonzero entries.% and show that when the matrix product is guaranteed to contain at most $b$ non-zero entries the problem can be solved deterministically in time $O(n^2 + nb^2 \log^3 n \log^2 b)$ and space $O(n + b^2 \log^3 n \log b)$ in two passes over the input. 
%The algorithm is also resilient to noise in the output entries. We extend the approach to a multiple pass version and show that it is well-suited for skewed matrix products, the number of passes being a user-defined parameter.  

\end{itemize}

Note that our algorithms easily generalize to rectangular matrix multiplication but for the ease of presentation we consider the case of square input matrices. Also, we will state the time complexity of our first algorithm using a function $\text{Sort}(n)$ denoting the running time of a linear space deterministic sorting algorithm. Clearly, $\text{Sort}(n) = O(n \log n)$ for comparison based sorting but under some assumptions on the elements to be sorted also better bounds are known, e.g. the $O(n \log \log n)$ time integer sorting algorithm by Han~\cite{best_sort}. 

%Since our algorithms are elementary and easy to implement, it is tempting to label them {\em combinatorial} for a clear distinction from {\em algebraic} approaches~\cite{copwin_mm,stothers_mm,virgi_mm}. However, our algorithms crucially depend on the fact that the input matrices consist are over the totally ordered field of real numbers and do not work over semirings, like for example the Boolean matrix multiplication algorithm by Bansal and Williams~\cite{bw_bmm}. Therefore, in the following we refrain from calling our algorithms combinatorial.   

%\paragraph{Notation}
%The set of integers $\{0,1,\ldots,n-1\}$ is denoted by $[n]$.
%We will use standard linear algebra notation. The outer product $uv$ of a column vector $u \in \mathbb{R}^n$ and a row vector $v \in \mathbb{R}^n$ is a matrix $O$ such that $O_{i,j} = u_iv_j$, $i,j \in [n]$.%For a column vector $u \in \mathbb{R}^n$ its transpose row vector is denoted as $u^T$ and the outer product $$ of a row vector $u$ and a column vector $v$.
%
%Additional definitions will be introduced in the following sections. 
%%
%
%
%
%
%
%
\section{Preliminaries}

\subsection{Definitions}

{\bf Linear algebra.} Let $\mathbb{R_+}$ denote the field of nonnegative real numbers. Given matrices $A, B \in \mathbb{R_+}^{n\times n}$ we denote their product by $C:=AB$. The $i$th row of a matrix $A$ is written as $A_{i,*}$, the $j$th column as $A_{*,j}$. We use the term {\em entry} to identify a position in the matrix, not its value.  Thus, {\em the weight} of the entry $(i, j)$ in $A$ is the value in the $i$th row and $j$th column, $A_{ij}$, $i,j \in [n]$, for $[n]:= \{0,1,\ldots,n-1\}$. When clear from the context however, we will omit weight. % and write the entry $(i, j)$ in $A$ instead of $A_{ij}$ or the weight of $(i, j)$.
For example, nonzero entries will refer to entries whose weight is different from 0 and by heavy entries we mean entries with large weight.

The {\em outer product} of a column vector $u \in \mathbb{R_+}^n$ and a row vector $v \in \mathbb{R_+}^n$ is a matrix $uv \in \mathbb{R_+}^{n \times n}$ such that $uv_{i,j} = u_iv_j$, $i,j \in [n]$. %The matrix product of $A$ and $B$ can be computed by the formula $C= \sum_{i=1}^nA_{*,i}\cdot B_{i, *}$ where $A_{*,i}\cdot B_{i, *}$ is the outer product of $A_{*,i}$ and $B_{i,*}$. 
The {\em rank}  of a positive real number $a \in \mathbb{R_+}$ in a matrix $A$, denoted as  $r_A(a)$, is the number of entries strictly smaller than $a$, plus 1. Note that $a$ does not need to be present in $A$. 

The {\em $p$-norm} of a vector $u \in \mathbb{R}^n$ is $\|u\|_p = (\sum_{i=1}^n |u_i|^p)^{\frac{1}{p}}$ for $p>0$.
Similarly, we define the {\em entrywise $p$-norm} of a matrix $A \in \mathbb{R}^{n\times n}$ as $\|A\|_{Ep}:=(\sum_{i,j \in [n]}|A_{i,j}|^p)^{1/p}$ for $p \in \mathbb{N}$. The case $p=2$ is the {\em Frobenius norm} of $A$ denoted as $\|A\|_F$. The {\em $k$-residual entrywise $p$-norm} $\|A\|_{{E}^kp}$ is the entrywise $p$-norm of the matrix obtained from $A$ after replacing the $k$ entries with the largest absolute values in $A$, ties resolved arbitrarily, with 0. 
\\\\
{\bf Data streaming.}
Our algorithms have strong connection to data streaming, therefore we will use the respective terminology. 
A  {\em stream} $S$ is a sequence of $N$ updates $(i, v)$ for items $i \in \mathcal{I}$ and $v \in \mathbb{R}$. We assume $\mathcal{I}= [n]$. The {\em frequency} of $i$ is $f_i = \sum_{(i, v) \in S}v$ and ${\bf f}_S = (f_0,\ldots,f_{n-1})$ is the {\em frequency vector} of the stream $S$. The {\em insert-only} model assumes $v>0$ for all updates and in the {\em non-strict turnstile} model there are no restrictions on $v$ and the values in ${\bf f}_S$ \cite{muthu_survey}. Similarly to matrix entries, we will also refer to the frequency of an item $i$ as the {\em weight} of $i$. Items with weight above ${||f||_1}/{b}$, for a user-defined $b$, will be called {\em $b$-heavy hitters} or just {\em heavy hitters} when $b$ is clear from the context. Ordering the items in $S$ according to their absolute weight, the heaviest $b$ items in $S$ are called the {\em top-b entries} in $S$.   
\\\\
{\bf Skewed distributions.}
A common formalization of the skewness in real-life datasets is the assumption of {\em Zipfian distribution}. 
The elements in a given set $M$ over $N$ elements with positive weights follow {\em Zipfian distribution} with parameter $z>0$ if the weight of the element of rank $i$ is $\frac{|M|}{\zeta(z)i^z}$ where $\zeta(z)= \sum_{i=1}^N{i^{-z}}$ and $|M|$ denotes the total weight of elements in $M$. We will analyze only the case when the skew in the data is not light and $z > 1$.   
For $z>1$, $\sum_{i=1}^N {i^{-z}}$ converges to a small constant. We will also use the facts that for $z>1$, $\sum_{i=b+1}^N {i^{-z}} = O(b^{1-z})$ and for $z>1/2$, $\sum_{i=b+1}^N {i^{-2z}} = O(b^{1-2z})$. %The bounds follow by integration of the corresponding continuous function.

\subsection{The column row method and memory efficient matrix multiplication}

The na\"ive algorithm for the multiplication of input matrices $A, B \in \mathbb{R}^{n\times n}$ works by computing the inner product of $A_{i,*}$ and $B_{*, j}$ in order to obtain $AB_{ij}$ for all $i, j \in [n]$. An alternative view of the approach is the {\em column row method} computing the sum of outer products $\sum_{i\in [n]}A_{*,i}B_{i, *}$. While this approach does not yield a better running time, it turns out to admit algorithmic modifications resulting in more efficient algorithms. Schnorr and Subramanian~\cite{outer_prod_mm} and Lingas~\cite{lingas_mm} build upon the approach and obtain faster algorithms for Boolean matrix multiplication. Assuming that $A$ is stored in column-major order and $B$ in row-major order, the approach yields a memory efficient algorithm since the matrix product $AB$ can be computed in a single scan over the input matrices. Recently, Pagh~\cite{pagh_mm} presented a new randomized algorithm combining the approach with frequent items mining algorithms~\cite{ams,count_sketch}. Inspired by this, we present another approach to modify the column-row method building upon ideas from deterministic frequent items mining algorithms~\cite{hot,demaine_et_al,karp_et_al,misra_gries}.
\subsection{Skewed matrix products} \label{skew_appl}

In~\cite{pagh_mm} Pagh discusses several applications where one is interested only in the entries with the largest absolute value in the matrix product. We argue that the restriction the input matrices to be nonnegative is feasible for many real-life problems. Since our first algorithm is both simple and efficient, we thus believe that it will have practical value.

In~\cite{cohen_lewis_mm} the authors discuss applications of nonnegative matrix multiplication for pattern recognition problems. The presented approximation algorithm is based on sampling and is shown to efficiently detect pairs of highly similar vectors in large databases for the case where the majority of vector pairs have low similarity. We present two concrete applications of nonnegative matrix multiplication where the entries in the output matrix adhere to a skewed distribution. 

A major problem in data mining is the detection of pairs of associations among items in {\em transactional databases}, see Chapter~5 in~\cite{dm_book} for an overview. In this problem, we are given a database of $m$ transactions, where a transaction is a subset of the ground set of items  $\mathcal{I}$. We are interested in items $i, j \in \mathcal{I}$ such that an occurrence of $i$ in a given transaction implies with high probability an occurrence of $j$ in the transaction, and vice versa. The {\em lift similarity measure}~\cite{brin_et_al} formalizes the above intuition. Let $S_i$ and $S_j$ be the set of transactions containing the items $i$ and $j$, respectively. Then the lift similarity between $i$ and $j$ is defined as $\frac{|S_i \cap S_j|}{|S_i||S_j|}$. The database is usually huge and does not fit in memory, therefore one is interested in solutions performing only a few scans of the database. There are randomized algorithms for the problem~\cite{bisam,cohen_et_al} but to the best of our knowledge there is no known deterministic algorithm improving upon the straightforward solution. 

The problem can be reduced to matrix multiplication as follows. In a first pass over the database we compute the exact number of occurrences $f_i$ for each item $i$. Then we create an $n \times m$ matrix $A$ such that $A_{ij} = 1/f_i$ if and only if the item $i$ occurs in the $j$th transaction. Now it is easy to see that the weight of the entry $(i, j)$ in the product $AA^T$ is equal to lift similarity between $i$ and $j$. The nonzero entries in the $j$th column in $A$ correspond to the items in the $j$th transaction, thus the column row approach is equivalent to computing the stream of Cartesian products for all transactions such that each occurrence of the the items pair $(i, \ell)$ has weight $1/(f_if_\ell)$

Figure~\ref{fig:lift} presents the distribution of lift similarities among pairs for two real datasets: pumsb, containing census data for population and housing,  and accidents, built from data about traffic accidents in Belgium. Both datasets are publicly available at the Frequent Itemset Mining Dataset Repository (http://fimi.ua.ac.be/data/). Note that the similarities are plotted on a doubly logarithmic scale and except for a few highly correlated pairs, the similarities among item pairs decrease almost linearly. Therefore, the assumption for Zipfian distribution is realistic.

As a second example consider the popular Markov Clustering algorithm (MCL) for the discovery of communities within networks~\cite{mcl}. In a nutshell, for a given graph the algorithm first creates a column stochastic matrix $M$ based on the graph structure. The algorithm then works iteratively. In each iteration we update $M$ to the product $MM$ and then set small entries to 0. A significant proportion of the entries in each column is small. $M$ converges fast to a sparse matrix indicating clusters around certain vertices in the original graph. The most time consuming steps are the first few iterations, thus an algorithm for the detection of the significant entries in matrix products can improve the time complexity. In an additional pass one can compute the exact weight of the significant entries. 

\begin{figure}[ht]
\begin{center}

\includegraphics[scale=0.55]{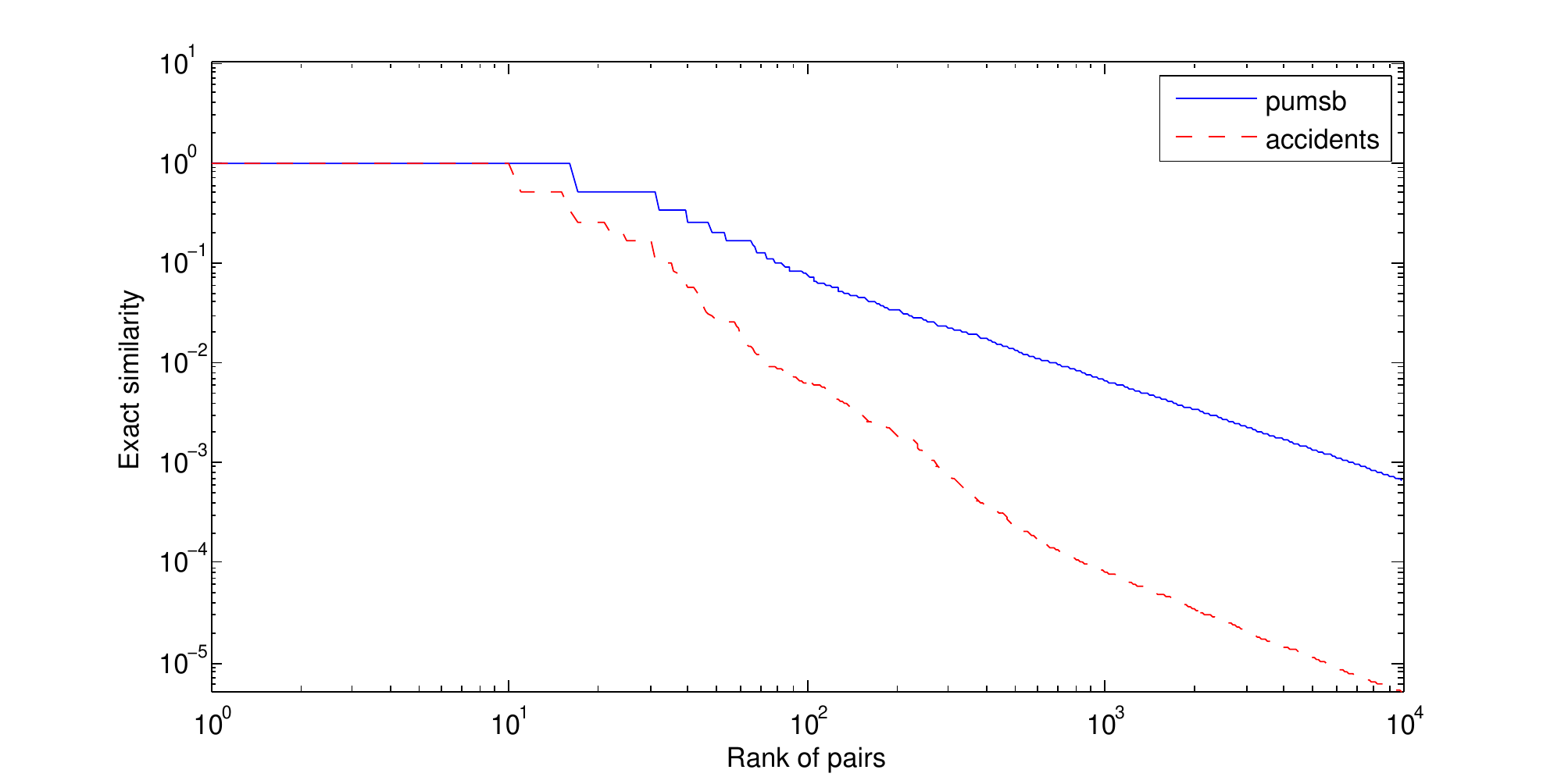}

%\vspace{-5mm}
\end{center}
\caption{The similarity distribution among pairs  for two datasets on a log-log scale. }\label{fig:lift}
\end{figure}

\section{An algorithm for nonnegative matrix products}

\subsection{Intuition and key lemma.}
%Before formally describing our algorithm we give some intuition for the basic idea. 
Recall first how the {\sc Majority} algorithm~\cite{mjrty} works. We are given a multiset $M$ of cardinality $N$ and want to find a majority element, i.e. an element occurring at least $N/2 + 1$ times in $M$. While there are two distinct objects in $M$ we remove them from $M$. It is easy to see that if there exists a majority element $a$, at the end only occurrences of $a$ will be in $M$. %More generally, if $a$ occurs at least $\alpha N$ times, $\alpha > 1/2$, then at the end it must occur at least $(\alpha -1/2)N$ in the updated multiset $M$. Note that this gives an additive approximation in terms of the cardinality of the original multiset.

The {\sc Frequent} algorithm~\cite{demaine_et_al,karp_et_al,misra_gries} builds upon this simple idea and detects $b$-heavy hitters in an unweighted stream $S$ of  $N$ updates $(i, 1)$ for items $i \in [n]$. We keep a \emph{summary} of $b$ distinct entries together with a counter lower bounding their weight. Whenever a new item $i$ arrives we check whether it is already in the summary and, if so, update the corresponding counter. Otherwise, if there is an empty slot in the summary we insert $i$ with a counter set to 1. In the case all $b$ slots are occupied we decrease the weight of all items by 1 and proceed with the next item in the stream. The last step corresponds to removing $b+1$ distinct items from the multiset of items occurring in $S$ and a simple argument shows that $b$-heavy hitters will be in the summary after processing the stream. By returning the estimated weight of the item in the summary and 0 for not recorded items, the weight of each item is underestimated by at most ${\|\bf{f}\|_1}/{b}$ where $\bf{f}$ is the frequency vector of the stream. Implementing the summary as a hash table and charging the cost of each item deletion to the cost incurred at its arrival the expected amortized cost per item update is constant. A sophisticated approach for decreasing the items weights in the summary leads to a worst case constant time per item update~\cite{demaine_et_al,karp_et_al}. 

Generalizing to nonnegative matrix multiplication by the column row method is intuitive. Assume the input matrices consist of \{0,1\}-valued entries only. We successively generate the $n$ outer products and run the {\sc Frequent} algorithm on the resulting stream associating entries with items. %This yields a memory efficient algorithm since we can generate all entries in an outer product using linear space. %Further, with Lemma~\ref{key_lemma} in hand generalizing to updates with non-negative real weights is straightforward. 
There are several problems to resolve: First, we want to multiply arbitrary nonnegative matrices, thus our algorithm has to handle weighted updates.
Second, we have to consider $\Theta(n^3)$ occurrences of weighted items in the stream. Third, we cannot apply any more the amortized argument for the running time analysis since a group of $b-1$ heavy items might be followed by many lighter items causing expensive updates of the summary and it is not obvious how to extend the deterministic approach from~\cite{demaine_et_al,karp_et_al} guaranteeing constant time updates in the worst case.

The first issue is easily resolved by the following

\begin{lemmx} \label{key_lemma}

Let $\bf{f}$ be the frequency vector of an insert only stream $S$ over a domain $[n]$. After successively decrementing $t$ times the weight of at least $b$ distinct items by $\Delta_i >0$, $1 \leq i \leq t$, such that at each step $f_i\geq 0$ for all $0 \leq i \leq n-1$, it holds $f_k >0$ for all $b$-heavy hitters, $k \in [n]$, for all $t \in \mathbb{N}$.

%Let $S$ be a set of $N$ elements $a_i$, $1 \leq i \leq N$ with a weight function $w: S \rightarrow \mathbb{R_+}$. Let $w(S) = \sum_{i}^N w(a_i)$. After repeatedly decreasing $t$ times the weight of at least $b$ different elements by some amount $\Delta_i>0$, $1 \leq i \leq t$, such that the weight of no element becomes negative, all elements $a_i \in S$ with initial weight $w(a_i) > \frac{w(S)}{b}$, will have weight at least $w(a_i) - \frac{w(S)}{b}$. 

\end{lemmx}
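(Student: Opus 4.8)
The plan is to lift the weight-removal accounting behind {\sc Majority} and {\sc Frequent} to the weighted setting. I write $W=\|\mathbf{f}\|_1=\sum_{i\in[n]}f_i$ for the total weight of the stream, let $\Delta_i>0$ be the amount subtracted at the $i$-th decrement step, set $D=\sum_{i=1}^{t}\Delta_i$, and denote by $f_k^{(s)}$ the weight of item $k$ after $s$ steps. The governing intuition is that each decrement step destroys a lot of total weight — at least $b\Delta_i$, since at least $b$ distinct counters drop by $\Delta_i$ each — whereas the nonnegativity hypothesis forbids ever destroying more weight than was present at the outset. Balancing these two observations will cap $D$, and since no single item can lose more than $D$ over the whole process, any sufficiently heavy item must stay strictly positive.

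First I would track the total remaining weight $W_s=\sum_{i\in[n]}f_i^{(s)}$. Since step $i$ subtracts $\Delta_i$ from at least $b$ distinct items, $W_s\le W_{s-1}-b\Delta_s$, and telescoping gives $W_t\le W-b\sum_{i=1}^{t}\Delta_i=W-bD$. The hypothesis that every coordinate remains nonnegative at every step gives $W_t\ge 0$, and combining the two inequalities yields the key bound $bD\le W$, that is $D\le\|\mathbf{f}\|_1/b$. I would emphasize that this reasoning fixes no horizon and uses only that the nonnegativity invariant is preserved throughout, so it holds simultaneously for every $t\in\mathbb{N}$, which is exactly what the statement asks.

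Next I would bound the loss of a single item: at step $i$ a fixed item $k$ loses $\Delta_i$ if it is among the chosen items and $0$ otherwise, so its cumulative loss is at most $\sum_{i=1}^{t}\Delta_i=D$, whence $f_k^{(t)}\ge f_k-D$. Combining with the previous paragraph, if $k$ is a $b$-heavy hitter then $f_k>\|\mathbf{f}\|_1/b\ge D$, so $f_k^{(t)}\ge f_k-D>0$. The algebra is routine; the one point demanding care — and the crux of a clean write-up — is the interface between the two estimates: the global bound is obtained by summing over items, the per-item bound by summing over steps, and it is precisely the nonnegativity hypothesis that licenses $W_t\ge 0$ and hence caps $D$. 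Were the decrements allowed to push weights negative, the cap on $D$ would collapse, so I would make explicit exactly where that hypothesis enters.
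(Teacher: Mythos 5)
Your proposal is correct and follows essentially the same argument as the paper: the nonnegativity invariant caps the total decrease $D=\sum_i\Delta_i$ via $bD\le\|\mathbf{f}\|_1$ (each decrement being witnessed by at least $b$ distinct counters), and a $b$-heavy hitter with $f_k>\|\mathbf{f}\|_1/b\ge D$ therefore stays strictly positive. Your write-up is merely a more explicit, telescoped version of the paper's two-line accounting.
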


\begin{proof} Since $f_i\geq 0$ for all $i \in [n]$ holds, the total decrease is bounded by $\|\bf{f}\|_1$. A decrement of $\Delta_i$ in the weight of a given item is witnessed by the same decrement in the weights of at least $b-1$ different items. Thus, we have $b \sum_{i=1}^t \Delta_i \leq \|\bf{f}\|_1$ which bounds the possible decrease in the weight of a heavy hitter to ${\|\bf{f}\|_1}/{b}$. %\qed
\end{proof}

In the next section we show that the specific structure of an outer product allows us to design efficient algorithms resolving the last two issues.%design a simple deterministic algorithm for efficiently updating the summary in the {\sc Frequent} algorithm for each of the $n$ outer products.  

\subsection{The algorithm.}

\begin{figure}[t]
%\begin{procedure}[h!]
%\TitleOfAlgo{Compute summary}%\caption{Compute summary}\label{main_alg}
%\Func{function}{a1, a2}
\Kw{\sc ComputeSummary}
\algsetup{indent=2em}
\begin{algorithmic}[1]
\REQUIRE matrices $A, B \in \mathbb{R_+}^{n\times n}$, summary $\mathcal{S}$ for $b$ entries
\FOR{$i \in [n]$}
\STATE Denote by $R:=A_{*,i}\cdot B_{i,*}$ the outer product of the $i$th column of $A$ and $i$th row of $B$
\STATE Find the weight $w_{b+1}^R$ of the entry of rank $b+1$ in  $R$
\STATE Let $\mathcal{L}$ be the $b$ entries in $R$ with rank less than $b+1$, i.e. the largest $b$ entries
\STATE Decrease the weight of each entry in $\mathcal{L}$ by $w_{b+1}^R$
\FOR{each entry $e$ entry occurring in $\mathcal{S}$ \emph{and} $\mathcal{L}$}
\STATE add $e$'s weight in $\mathcal{L}$ to $e$'s weight in $\mathcal{S}$
\STATE remove $e$ from $\mathcal{L}$
\ENDFOR 
\STATE Find the weight $w_{b+1}^{\mathcal{S} \cup \mathcal{L}}$ of the entry of rank $b+1$ in $\mathcal{S} \cup \mathcal{L}$, if any
\STATE Update $\mathcal{S}$ to contain the largest $b$ entries in $\mathcal{S} \cup \mathcal{L}$ and decrease their weight by $w_{b+1}^{\mathcal{S}\cup \mathcal{L}}$
\ENDFOR
\end{algorithmic}
%\end{procedure}
%
%
%
%
%
%
\bigskip
%\begin{procedure}[h!]
\algsetup{indent=2em}
\Kw{\sc EstimateEntry}
\begin{algorithmic}[1]
\REQUIRE Entry $(i, j)$
\IF{ $(i, j)$ is in the summary $\mathcal{S}$} 
\STATE return the weight of $(i, j)$ in $\mathcal{S}$
\ELSE 
\STATE return 0
\ENDIF
\end{algorithmic}
%\end{procedure} 
\caption{A high-level pseudocode description of the algorithm. In {\sc ComputeSummary} we iterate over the $n$ outer products and to each one of them apply Lemma~\ref{key_lemma} such that only the $b$ heaviest entries remain. We update the summary with the entries output by the outer product. After processing the input matrices we can estimate the weight of an individual entry by checking the summary.}  \label{main_alg} %{\bf Implement the summary as a hash table.}} \label{main_alg} 
\end{figure}
We assume that $A \in \mathbb{R}_+^{n \times n}$ is stored in column-major order and $B \in \mathbb{R}_+^{n \times n}$ in row-major order. We show how to modify the column row method in order to obtain an additive approximation of each entry in $AB$ in terms of the entrywise 1-norm of $AB$. 

Essentially, we run the {\sc Frequent} algorithm for the stream of $n$ outer products: we keep a summary $\mathcal{S}$ of $b$ distinct items and for each outer product we want to update the summary with the incoming weighted entries over the domain $[n]\times [n]$. The main difference is that for $b = o(n^2)$ we can use the specific structure of an outer product and update the summary in $o(n^2)$ steps. In {\sc ComputeSummary} in Figure~\ref{main_alg} for each of the $n$ outer products we simulate the successive application of Lemma~\ref{key_lemma} until at most $b$ entries with weight larger than 0 remain in the outer product. We then update $\mathcal{S}$ with the remaining entries. %simulate the following iterative procedure: Find at least $b$ distinct entries in the outer product and decrease their weight by the same amount until there are at most $b$ entries with weight more than 0. At the end we update the summary with the surviving at most $b$ entries. 
\\\\
{\bf Correctness.}

\begin{lemmx} \label{correctness}
Let $w$ be the weight of an entry $(i,j)$ in the product $C=AB$.
After termination of {\sc ComputeSummary} for the estimated weight $\overline{w}$ of $w$ returned by {\sc EstimateEntry}, $i,j \in [n]$, holds $\mbox{max}(w- {\|C\|_{E_1}}/{b}, 0) \leq \overline{w} \leq w$.
\end{lemmx}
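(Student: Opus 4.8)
The plan is to show that the entire execution of \textsc{ComputeSummary} is an instance of the generalized \textsc{Frequent} process governed by Lemma~\ref{key_lemma}, run on the stream over the item domain $[n]\times[n]$ whose $k$-th outer product inserts the weighted items $(i,j)\mapsto A_{ik}B_{kj}$. Because all entries are nonnegative, the frequency vector of this stream is exactly $\mathrm{vec}(C)$ and its $1$-norm is $\sum_{i,j} C_{ij} = \|C\|_{E1}$. With this identification it suffices to prove two things: (i) the value $\mathcal S$ ever stores for an entry never exceeds its true accumulated weight, giving $\overline w \le w$; and (ii) the total amount ever subtracted from any single entry is at most $\|C\|_{E1}/b$, giving $\overline w \ge w - \|C\|_{E1}/b$. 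Together with the facts that stored weights and the default output $0$ are nonnegative and that $w\ge 0$, these yield $\max(w-\|C\|_{E1}/b,0)\le \overline w\le w$.

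For (i) I would maintain the invariant that, after each outer product is processed, the weight $\mathcal S$ records for any entry $(i,j)$ is at most the sum of $A_{ik}B_{kj}$ over the outer products processed so far. The invariant is preserved because the algorithm only ever (a) inserts into $\mathcal S$ the weight an entry carries in $\mathcal L$, which is its genuine outer-product contribution already reduced by a nonnegative amount in line~5, and (b) decreases stored weights, in lines~5 and~11. Since no step adds phantom weight, the recorded value stays below the true cumulative weight, so at termination $\overline w\le w$; returning $0$ for a dropped entry is also $\le w$ by nonnegativity.

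The crux is (ii), where I apply the counting argument behind Lemma~\ref{key_lemma}. Both batch reductions --- keeping the top $b$ of the outer product $R$ while subtracting $w_{b+1}^R$ (lines~3--5), and keeping the top $b$ of $\mathcal S\cup\mathcal L$ while subtracting $w_{b+1}^{\mathcal S\cup\mathcal L}$ (lines~10--11) --- implement the same map $x\mapsto \max(x-\tau,0)$ on the current weights, where $\tau$ is the rank-$(b{+}1)$ weight. The plan is to realize this map as a finite sequence of elementary decrements: decrease every currently positive weight uniformly, pausing whenever some weight reaches $0$, until the cumulative decrease equals $\tau$. I expect the main obstacle to be verifying that each elementary decrement touches at least $b$ distinct items, robustly to ties and to the sub-threshold entries that are simply dropped. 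This holds because there are at least $b$ entries of weight $\ge\tau$ (the top $b$ ranks), and each of them stays strictly positive until the cumulative decrease reaches $\tau$; hence at every step before that point at least $b$ items are positive and are being decremented. Every elementary step therefore satisfies the hypothesis of Lemma~\ref{key_lemma}, all weights remain nonnegative, and since each step removes at least $b\Delta_i$ units of weight while the total weight ever inserted is $\|C\|_{E1}$, summing gives $b\sum_i\Delta_i\le\|C\|_{E1}$. As the decrease suffered by any fixed entry is at most $\sum_i\Delta_i\le\|C\|_{E1}/b$, we obtain $\overline w\ge w-\|C\|_{E1}/b$, completing the bound.
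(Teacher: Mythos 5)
Your proof is correct and follows essentially the same route as the paper: both decompose the batch operation ``subtract the rank-$(b{+}1)$ weight from the top $b$'' into a sequence of uniform decrements, each of which provably touches at least $b$ (the paper notes $b{+}1$) positive entries, and then invoke Lemma~\ref{key_lemma} on the stream of $n$ outer products whose frequency vector has $1$-norm $\|C\|_{E1}$. Your explicit treatment of the upper bound $\overline{w}\le w$ and of ties is a welcome completeness detail that the paper leaves implicit, but it does not constitute a different approach.
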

\begin{proof}
The product $AB$ equals $\sum_{i=0}^{n-1}a_i \cdot b_i$ for the columns $a_0,\ldots,a_{n-1}$ of $A$ and the rows $b_0,\ldots,b_{n-1}$ of $B$. We consider each outer product as $n^2$ updates for different entries over the domain $[n]\times[n]$ in an insert only stream with positive real weights. We show how the algorithm updates the summary for a single outer product $R$. First, in line 3 the algorithm finds the entry of rank $b+1$ in $R$. In line 4 we decrease the weight of the $b$ largest entries by $w_{b+1}^R$ which yields the same result as the following iterative procedure: While there are at least $b+1$ nonzero entries in $R$, find the entry with smallest weight $w_{\text{min}}$ in $R$ and decrease the weight of all non-zero entries by $w_{\text{min}}$. Equivalence holds because we always decrease the weight of an entry with the smallest weight and thus the decrease of the largest $b$ entries weights can never exceed $w_{b+1}^R$. Also, the decrease can not be smaller than $w_{b+1}^R$ since otherwise we would have more than $b$ non-zero entries in the outer product.  Thus, we always decrease by the same amount the weight of at least $b+1$ different entries which by Lemma~\ref{key_lemma} guarantees the claimed approximation error. In lines 6--10 we apply essentially the same procedure again for the nonzero entries in the outer product and the entries in the summary. The remaining at most $b$ nonzero entries constitute the updated summary. %\qed
\end{proof}
\paragraph{Running time.} In the following lemmas we present efficient deterministic algorithms for the subroutines used in {\sc ComputeSummary}. 
We concentrate how the algorithm updates the summary for a single outer product. Before presenting our approach, we give the main building blocks that will be used to achieve an efficient solution.  

\begin{lemmx} \label{median}
Given two sorted vectors $u, v \in \mathbb{R_+}^n$  we can find the entry of rank $b$ in the outer product $uv$ in time and space $O(n)$.  
\end{lemmx}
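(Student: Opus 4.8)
The key observation is that the outer product $R$ with $R_{ij}=u_iv_j$, although it has $n^2$ entries, need never be materialized: since $u$ and $v$ are stored explicitly in $O(n)$ space, any entry $R_{ij}$ is available in $O(1)$ time as the product $u_iv_j$. The task is therefore a selection problem with implicit (oracle) access to an $n\times n$ matrix, and the whole point is to solve it while inspecting only $O(n)$ of its entries.

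First I would exploit the monotone structure of $R$. The hypothesis gives $u$ and $v$ sorted, and a direction may be fixed by reading them forwards or backwards, so assume both are nondecreasing. Because every $u_i,v_j\ge 0$, fixing a row index $i$ makes $R_{i,\cdot}=u_i\cdot v$ nondecreasing in the column index, and fixing a column index $j$ makes $R_{\cdot,j}=u\cdot v_j$ nondecreasing in the row index. Hence $R$ is a matrix with simultaneously sorted rows and sorted columns --- precisely the structured input for which selecting the element of a prescribed rank is the classical sorted-matrix selection problem.

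Next I would invoke (or reconstruct) the Frederickson--Johnson selection procedure for sorted matrices, which finds the element of any given rank in an $n\times n$ matrix with sorted rows and columns in $O(n)$ time and space, using only $O(1)$-time oracle access to entries. The engine behind its linear bound is geometric candidate elimination: maintain, for each row, the contiguous interval of columns still consistent with the target rank; repeatedly take a weighted median of the interval midpoints, count in $O(n)$ --- via a single monotone two-pointer sweep, since sortedness makes ``number of entries $\le x$'' computable in linear time --- how many entries lie below it, and discard a constant fraction of the surviving candidates accordingly. Since the candidate pool shrinks by a constant factor each round and each round costs $O(n)$, the total work telescopes to $O(n)$, and the only state retained is the per-row intervals, giving $O(n)$ space. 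Applying this with target rank $b$ returns $w_b^R$.

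The step I expect to be the real obstacle is keeping everything at $O(n)$ rather than $O(n\log n)$. The temptingly simple route --- binary searching a real threshold $x$ and counting, by two pointers, the pairs with $u_iv_j\le x$ --- costs $O(n)$ per count but needs $\Theta(\log n)$ rounds to localize the answer among the up to $n^2$ distinct products, yielding $O(n\log n)$. Obtaining the stated linear bound forces the geometric pruning of the sorted-matrix algorithm, so the crux is verifying that its elimination invariants hold for $R$ and that ties (repeated products, or zero entries when some $u_i$ or $v_j$ vanishes) are counted consistently with the rank definition; none of these affects the asymptotics.
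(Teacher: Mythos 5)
Your proposal is correct and essentially matches the paper's argument: both reduce the problem to selection in a doubly-sorted structure and invoke the linear-time deterministic selection algorithm of Frederickson and Johnson. The only cosmetic difference is that the paper first applies a logarithm to turn the outer product into a Cartesian sum $X+Y$, whereas you work directly with the product matrix having sorted rows and columns (which, incidentally, sidesteps the undefined $\log 0$ when some $u_i$ or $v_j$ vanishes).
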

\begin{proof}
We reduce the problem to selection of the element of rank $b$ in a Cartesian sum $X + Y = \{x+y: x \in X, y \in Y\}$ for sorted sets of real numbers $X$ and $Y$. Setting $U = \{ \log u_i: u_i \in u\}$ and $V = \{ \log v_i: v_i \in v\}$ and searching in the Cartesian sum $U + V$ for the element of rank $b$ corresponds to searching for the entry of rank $b$ in the outer product $uv$, this follows from monotonicity of the $\log: \mathbb{R_+} \rightarrow \mathbb{R}$ function. The best known deterministic algorithm for selection in a Cartesian sum of two sorted sets~\cite{selection_XY} runs in time and space $O(n)$.  
\end{proof}

\begin{lemmx} \label{rank} Given vectors $u, v \in \mathbb{R_+}^n$, with elements sorted in descending order, we can output an implicit representation of the largest $b$ elements from the outer product $uv$ in a data structure $\mathcal{L}$ in time and space $O(n)$. 
\end{lemmx}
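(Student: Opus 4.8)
### Proof Proposal for Lemma~\ref{rank}

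The plan is to exploit the special multiplicative structure of the outer product $uv$ together with the fact that both input vectors are sorted in descending order. Since $u_1 \geq u_2 \geq \dots \geq u_n$ and $v_1 \geq v_2 \geq \dots \geq v_n$, the matrix of products $uv_{ij} = u_i v_j$ is monotone nonincreasing along each row and down each column. This is exactly the structure of a \emph{Young tableau} (or sorted matrix), and the task of extracting the $b$ largest entries becomes a selection problem on an implicitly sorted matrix. The key observation is that we never want to materialize the full $n \times n$ product, which would cost $\Theta(n^2)$; instead we represent the selected entries implicitly.

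First I would apply Lemma~\ref{median} to find, in $O(n)$ time and space, the weight $w_{b+1}^{uv}$ of the entry of rank $b+1$ in the outer product. This threshold splits the entries cleanly: an entry $(i,j)$ belongs to the top-$b$ set precisely when $u_i v_j > w_{b+1}^{uv}$ (with ties handled arbitrarily, as permitted in the residual-norm definition). The crucial structural fact is that for each fixed row index $i$, the set of columns $j$ satisfying $u_i v_j > w_{b+1}^{uv}$ is a \emph{prefix} of the sorted vector $v$, since $v_j$ is decreasing. Concretely, for row $i$ the qualifying columns are exactly $j = 1, \dots, c_i$ where $c_i$ is the largest index with $v_{c_i} > w_{b+1}^{uv}/u_i$. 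Because $u_i$ is itself decreasing in $i$, the cutoffs satisfy $c_1 \geq c_2 \geq \dots \geq c_n$, so the top-$b$ region forms a \emph{staircase}.

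The main step, then, is to compute the $n$ cutoff values $c_1, \dots, c_n$ efficiently. A naive binary search per row gives $O(n \log n)$; to hit the target $O(n)$ I would instead walk a single two-pointer (staircase) boundary: start with a pointer at the end of $v$ for row $1$, and as $i$ increases the threshold $w_{b+1}^{uv}/u_i$ only increases, so the cutoff pointer $c_i$ moves monotonically leftward and never backtracks. A single linear sweep therefore determines all cutoffs in total time $O(n)$. I would store $\mathcal{L}$ as the list of pairs $(i, c_i)$ together with the vectors $u$ and $v$; this is the implicit representation, occupying $O(n)$ space, from which any individual top-$b$ weight $u_i v_j$ (for $j \leq c_i$) can be recovered in $O(1)$ and the entries enumerated on demand.

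The part requiring the most care is the handling of ties at the threshold $w_{b+1}^{uv}$: several entries may share exactly this value, and the staircase boundary as defined by strict inequality may leave fewer than $b$ entries, so I would need to add back the appropriate number of boundary entries of value exactly $w_{b+1}^{uv}$ to reach precisely $b$. This bookkeeping is routine but must be done carefully to guarantee that $\mathcal{L}$ contains exactly the $b$ largest entries. The reduction to a sorted-matrix selection and the monotone two-pointer sweep are the conceptual core; everything else is a consequence of the sortedness of $u$ and $v$ and the $O(n)$ selection guarantee inherited from Lemma~\ref{median}.
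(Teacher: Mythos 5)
Your proposal is correct and follows essentially the same route as the paper: invoke Lemma~\ref{median} to obtain the threshold element, then exploit the monotonicity of $u_iv_j$ in $i$ and $j$ to trace the staircase boundary with a single two-pointer sweep, storing only $u$, $v$, and the per-row cutoffs as the implicit representation. Your explicit attention to ties at the threshold is a point the paper's proof passes over silently, but it does not change the argument.
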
 
\begin{proof} 
Assume we have found the entry of rank $b$ as outlined in Lemma~\ref{median}, let this element be $c$.
Let $i$, $j$ be two pointers for $u$ and $v$ respectively. Initialize $i=0, j=n-1$. Assume $i$ is fixed. We compare $c$ to $u_i v_j$. While it is larger or equal, we move left $v$'s pointer by decreasing $j$ by 1. At the end we add the pair $(i, j)$ to $\mathcal{L}$, denoting that the entries in $i$th row of $uv$ bigger than $c$, and thus of rank less than $b$, are all $(i, \ell)$ for $\ell \le j$. Then we go to the next row in $uv$ by incrementing $i$ and repeat the above while-loop starting with the current value of $j$. When $i=n$ or $j=0$ we know that all entries smaller than $c$ have been found. Correctness is immediate since the product $u_iv_j$ is monotonically increasing with $i$ and decreasing with $j$, and thus for each row of the outer product we record the position of the entries smaller than $c$ in $\mathcal{L}$.  Both $i$ and $j$ are always incremented or respectively decremented, thus the running time is linear in $n$. We need to explicitly store only $u, v$ and $\mathcal{L}$, this gives the claimed space usage. %\qed
\end{proof}

Next we present an efficient approach for updating the summary for a given outer product after finding the entries of rank at most $b$.

\begin{lemmx} \label{rtime_op}
For a given outer product $uv$, $u, v \in \mathbb{R_+}^n$ we update the summary $\mathcal{S}$ in time $O(b + \text{sort}(n))$.
\end{lemmx}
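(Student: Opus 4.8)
The plan is to bound the cost of updating the summary $\mathcal{S}$ for a single outer product $uv$ by carefully composing the building blocks established in Lemmas~\ref{median} and~\ref{rank}. The first step is to sort the two input vectors $u$ and $v$ in descending order, which costs $O(\text{Sort}(n))$; this is the only place where a genuine sorting step is incurred, and it explains the $\text{Sort}(n)$ term in the stated bound. Once $u$ and $v$ are sorted, I would invoke Lemma~\ref{median} to find the weight $w_{b+1}^R$ of the entry of rank $b+1$ in the outer product $R = uv$ in time $O(n)$, and then invoke Lemma~\ref{rank} to obtain, also in $O(n)$ time and space, the implicit representation $\mathcal{L}$ of the $b$ heaviest entries as a list of $O(n)$ pairs $(i, j)$ recording, for each row $i$, the range of columns whose entries exceed the threshold.

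The heart of the argument is to show that the remaining bookkeeping — decreasing the weights of the entries in $\mathcal{L}$ by $w_{b+1}^R$, matching them against the entries already present in $\mathcal{S}$, and recomputing the new top-$b$ set after the merge (lines 5--11 of {\sc ComputeSummary}) — can be carried out in time $O(b)$ rather than $O(n^2)$. The key observation is that, although $\mathcal{L}$ is represented implicitly by $O(n)$ row-intervals, it contains exactly $b$ entries, so explicitly enumerating those $b$ entries and their weights $u_i v_j$ takes $O(b)$ time once the implicit representation is in hand. I would maintain $\mathcal{S}$ as a hash table keyed by entry coordinate $(i,j)$, so that for each of the $b$ entries of $\mathcal{L}$ the membership test and the weight update in line~7 cost $O(1)$ amortized. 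The merged collection $\mathcal{S} \cup \mathcal{L}$ then has at most $2b$ entries, so finding the new rank-$(b+1)$ weight $w_{b+1}^{\mathcal{S} \cup \mathcal{L}}$ by a linear-time selection routine and rebuilding the summary with the decremented top-$b$ entries both cost $O(b)$.

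Summing the contributions gives $O(\text{Sort}(n))$ for the initial sort, $O(n)$ for the two vector-based subroutines of Lemmas~\ref{median} and~\ref{rank}, and $O(b)$ for the summary merge, and since $b$ may be smaller or larger than $n$ the bound is written as $O(b + \text{Sort}(n))$, absorbing the intermediate $O(n)$ term (note $n \le \text{Sort}(n)$ always, and typically $n = O(b)$ in the regime of interest). I would close by remarking that the approximation guarantee is untouched here — it was already secured in Lemma~\ref{correctness} via the repeated application of Lemma~\ref{key_lemma} — so this lemma is purely a complexity statement and need not revisit correctness.

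The step I expect to be the main obstacle is justifying that the implicit, interval-based description of $\mathcal{L}$ produced by Lemma~\ref{rank} can be converted into exactly the $O(b)$ operations needed to update $\mathcal{S}$ without hidden overhead. In particular, one must be careful that enumerating the $b$ entries from the row-intervals, looking each up in $\mathcal{S}$, and handling the entries of $\mathcal{S}$ that do \emph{not} appear in $\mathcal{L}$ (which must survive into the merged pool) are all charged correctly; a naive treatment risks scanning all $n$ rows or all $n^2$ entries. The clean way to avoid this is to observe that only entries already stored in $\mathcal{S}$ or newly surfaced in $\mathcal{L}$ can ever be in the new summary, both sets having size $O(b)$, so the total work is driven by $b$ and not by the ambient dimension $n^2$.
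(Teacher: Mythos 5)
Your outline matches the paper up to the point where the summary must be updated: sort $u$ and $v$ in $O(\text{Sort}(n))$, apply Lemmas~\ref{median} and~\ref{rank} in $O(n)$, and observe that selection of the rank-$(b+1)$ element in a pool of size $2b$ costs $O(b)$. The gap is in how you match the $b$ entries of $\mathcal{L}$ against $\mathcal{S}$. You propose to keep $\mathcal{S}$ as a hash table with $O(1)$ \emph{amortized} lookups and updates, but that guarantee is an expected-time guarantee of randomized hashing; the paper explicitly flags this in its discussion of the {\sc Frequent} algorithm (``Implementing the summary as a hash table \ldots the \emph{expected} amortized cost per item update is constant'') and the entire point of this lemma is to obtain a worst-case \emph{deterministic} bound. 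No deterministic dynamic dictionary with constant-time lookups and updates is available here; the static construction of Ru\v zi\'c that the paper cites is used only once, after all $n$ outer products have been processed, and rebuilding it for every outer product would add an $O(b\log^2\log b)$ term per product that you have not accounted for.

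The paper's actual resolution, which your proposal omits, is to impose a \emph{position total order} on entries ($(i_1,j_1)<(i_2,j_2)$ iff $i_1 n + j_1 < i_2 n + j_2$), keep $\mathcal{S}$ permanently sorted in that order, and produce the $b$ entries of $\mathcal{L}$ also sorted in that order, so that the intersection and merge become a deterministic linear scan in $O(b)$. The nontrivial content is producing $\mathcal{L}$ in position order without paying $\text{Sort}(b)$ — which would be too expensive, since $b$ may greatly exceed $n$ and $b\log b$ does not fit inside $O(b+\text{Sort}(n))$. The paper exploits the nested structure of the row intervals $r_\ell \le r_{\ell-1} \le \cdots$ returned by Lemma~\ref{rank}: each row's set of surviving columns extends the previous row's, so one only sorts the $r_{q-1}-r_q$ new column indices and merges, giving total cost $\sum_i \bigl(r_i + \text{Sort}(r_i - r_{i+1})\bigr) = O(b + \text{Sort}(n))$ by superadditivity of $\text{Sort}$. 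This incremental sorting argument is the heart of the lemma, and without it (or a deterministic substitute for your hash table) the claimed bound does not follow.
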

\begin{proof}

We first sort the vectors $u$ and $v$ in decreasing order according to the values $u_i$ and $v_i$, respectively. Let us call the sorted vectors $u^s$ and $v^s$. Each entry in $u^s$ and $v^s$ will be of the form $(val, pos)$ such that $u_{pos} = val$ and $v_{pos} = val$, respectively, i.e. $pos$ will record the position of a given value in $u$ and $v$. 
We define the entry $(i,j)$ in the outer product $u^s v^s$ as a $(val_u val_v, (pos_u, pos_v))$ such that $u^s_i = (val_u, pos_u)$ and $u^s_j = (val_v, pos_v)$. Comparing the entries on the $val_u val_v$ values, we can assume that we compute the outer product of two sorted vectors.

Assume we have computed the data structure $\mathcal{L}$ implicitly representing the largest $b$ entries in $u^sv^s$, as shown in Lemma~\ref{rank}.
Now we show how to update the summary with the entries in $\mathcal{L}$ in time $O(b + \text{sort}(n))$. We introduce a {\em position total order} on entries such that $(i_1, j_1) < (i_2, j_2)$ iff $i_1n + j_1 < i_2n + j_2$, $i, j \in [n]$. We will keep the entries in $\mathcal{S}$ in the summary sorted according to this order. Assume we can output the $b$ heaviest entries from a given outer product sorted according to the position total order in $\mathcal{L}$. Then in a merge-like scan through $\mathcal{S}$ and $\mathcal{L}$ we update the entries in $\mathcal{S} \cap \mathcal{L}$, remove those from $\mathcal{L}$ and obtain a sorted data structure containing the entries from $\mathcal{S}$ and $\mathcal{L}$ in $O(b)$ steps. The entry of rank $b+1$ in the set $\mathcal{L}\cup \mathcal{S}$, which has size at most $2b$, can be found in $O(b)$ by~\cite{median}. Thus, if the entries in $\mathcal{L}$ are sorted according to the position total order, updating the summary will run in $O(b)$ steps.

We output the $b$ heaviest entries sorted according to the position total order by the following algorithm. Let $\mathcal{L}$ be implicitly given as a sorted column vector $u^s$ and a sorted row vector $v^s$ as described above, and $\ell \leq n$ integer pairs $(q, r_q)$ denoting that in the $q$th row in the outer product $u^sv^s$ the first $r_q > 0$ entries have rank not more than $b$. Clearly, $r_q$ will monotonically decrease as $q$ increases. We start with $q= \ell$, namely the shortest interval, sort the $r_q$ entries according to the position total order. We then decrease $q$ by 1 and sort the next $r_{q-1}$ entries according to the position total order. However, we observe that due to monotonicity $r_{q-1} \geq r_q$ and all elements from $v^s$ appearing in the $q$th row of $u^sv^s$ also appear in the $(q-1)$th row. Thus, we can sort only the new $r_{q-1} - r_q$ elements and then merge the result with the already sorted $r_q$ elements. We continue like this until the elements in each row of the outer product have been sorted. Then we sort the elements in the column vector $u^s$ according to their position, keeping a pointer to the corresponding sorted subinterval of $v^s$ for each entry in $u^s$. From this we build the set $\mathcal{L}$ with entries sorted according the position total order. By setting $r_{\ell + 1} = 0$ the running time for the $\ell$ mergings and sortings  amounts to $\sum_{i=0}^\ell (r_i + \text{Sort}(r_i - r_{i+1}))$. We can bound this sum by $O(b + \text{Sort}(n))$ since $\sum_{i=0}^\ell r_i = b$, $\sum_{i=0}^\ell (r_i - r_{i+1}) \leq n$ and $\sum_{i=0}^{n-1} f(x_i) \le f(\sum_{i=0}^{n-1} x_i)$ for a monotonically growing superlinear function and numbers $x_i$, $i \in [n]$, in its domain. %\qed
\end{proof}

%
%
%
%
%
% APPROXIMATION ANALYSIS
%
%
%
%

\subsection{Analysis of the approximation guarantee.}

%We make use of the following notation for the analysis of the approximation error of our algorithm.
%
%The $p$-norm of a vector $u \in \mathbb{R}^n$ is $\|u\|_p = (\sum_{i=1}^n |u_i|^p)^{\frac{1}{p}}$ for $p>0$.
%Similarly, we define the {\em entrywise $p$-norm} of a matrix $A \in \mathbb{R}^{n\times n}$ as $\|A\|_{E_p}:=(\sum_{i,j \in [n]}|A_{i,j}^p|)^{1/p}$ for $p \in \mathbb{N}$. The case $p=2$ is the {\em Frobenius norm} of $A$ denoted as $\|A\|_F$.  $\|A\|_k^p$ is the entrywise $p$-norm after replacing the $k$ entries with the largest absolute values, ties resolved arbitrarily, with 0. 
%
%The analysis relies on recent results for data stream mining. A  {\em stream} $S$ is a sequence of $m$ updates $(i, v)$ for items $i \in [n]$ and $v \in \mathbb{R}$. The {\em frequency} of $i$ is $f_i = \sum_{(i, v) \in S}v$ and ${\bf f}_S = (f_0,\ldots,f_{n-1})$ is the {\em frequency vector} of the stream $S$. The insert-only model assumes $v>0$ for all updates. We will also refer to the frequency of an item $i$ as the {\em weight} of $i$.

The only remaining component is how to efficiently answer queries to the summary after processing all outer products. We use a static dictionary with constant look-up time. Observing that the entries are from a universe of size $n^2$, the best known result by Ru\v zi\'c~\cite{pucai_kume} provides a construction in time $O(b \log^2 \log b)$ and space $O(b)$.  Note that $b < n^2$, therefore as a first result we observe that Lemmas~\ref{correctness} and~\ref{rtime_op} immediately yield the following

\begin{lemmx} \label{approx_lemma}
Given $n \times n$-matrices $A,B$ with non-negative real entries, there exists a deterministic algorithm approximating the weight of each entry in the product $C$ of $A$ and $B$ within an additive error of ${\|C\|_{E_1}}/{b}$. The algorithm runs in time $O(nb + n \text{Sort}(n))$ and space $O(b+n)$ in one pass over the input matrices. %In a second pass one can obtain the exact weight of all entries in $C$ larger than $\frac{\|C\|_{E_1}}{b}$ in time $O(nb)$.
\end{lemmx}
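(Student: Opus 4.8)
The plan is to assemble Lemma~\ref{approx_lemma} directly from the three already-established ingredients: the correctness guarantee of Lemma~\ref{correctness}, the per-outer-product update time of Lemma~\ref{rtime_op}, and the static dictionary of Ru\v zi\'c cited just above the statement. The approximation error requires no new work: Lemma~\ref{correctness} already shows that {\sc EstimateEntry} returns $\overline{w}$ with $\max(w - \|C\|_{E1}/b, 0) \le \overline{w} \le w$ for every entry $(i,j)$, which is exactly the claimed additive error of $\|C\|_{E1}/b$. So the substance of the proof is the accounting of time, space, and the single-pass property.

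\emph{Time.} I would argue that {\sc ComputeSummary} consists of $n$ iterations, one per outer product. Each iteration performs the selection of the rank-$(b+1)$ entry (Lemma~\ref{median}, cost $O(n)$), the implicit extraction of the top-$b$ entries (Lemma~\ref{rank}, cost $O(n)$), and the summary update (Lemma~\ref{rtime_op}, cost $O(b + \text{Sort}(n))$). The dominating term per iteration is therefore $O(b + \text{Sort}(n))$, so summing over the $n$ outer products gives $O(nb + n\,\text{Sort}(n))$. The only loose end is the one-time cost of building the final static dictionary, $O(b \log^2 \log b)$ by Ru\v zi\'c; I would note that since $b < n^2$ this is dominated by the $O(nb)$ term (indeed $b\log^2\log b = o(nb)$), so it is absorbed into the stated bound and does not affect the total.

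\emph{Space.} Here I would observe that the persistent state is the summary $\mathcal{S}$, holding at most $b$ entries, so $O(b)$. Within each iteration the subroutines of Lemmas~\ref{median}, \ref{rank}, and~\ref{rtime_op} each use $O(n)$ auxiliary space, and crucially this scratch space is reused across iterations rather than accumulated, so it contributes $O(n)$ and not $O(n^2)$. Together this yields $O(b + n)$. For the single-pass claim I would point out that, under the assumed column-major storage of $A$ and row-major storage of $B$, the $i$th iteration reads exactly the $i$th column of $A$ and the $i$th row of $B$; since the iterations proceed in increasing $i$, the algorithm touches each column of $A$ and each row of $B$ exactly once in order, which is a single streaming pass over the input.

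\emph{Main obstacle.} Honestly, for this particular lemma there is no deep obstacle remaining — the heavy lifting was done in Lemmas~\ref{correctness} and~\ref{rtime_op}, and the statement is essentially a bookkeeping corollary, which is why the excerpt itself says the result follows ``immediately.'' The one point requiring a moment's care is confirming that the dictionary-construction term $O(b\log^2\log b)$ and the per-iteration $O(n)$ scratch space are genuinely subsumed by the advertised $O(nb + n\,\text{Sort}(n))$ time and $O(b+n)$ space, rather than silently inflating them; I would make that domination explicit so the bounds are clean.
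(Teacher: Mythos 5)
Your proposal is correct and matches the paper, which gives no separate proof for this lemma and simply states that it follows immediately from Lemma~\ref{correctness} (the error bound) and Lemma~\ref{rtime_op} (the per-outer-product cost), together with the Ru\v zi\'c dictionary for constant-time queries. Your explicit bookkeeping of the dictionary-construction cost, the reused $O(n)$ scratch space, and the single-pass access pattern just fills in the routine details the paper leaves implicit.
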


%Assuming all entries in the product matrix have equal values the approximation guarantee given by Lemma~\ref{approx_lemma} is the best possible. 
It was first observed by Bose et al.~\cite{bose_et_al} that the {\sc Frequent} algorithm guarantees tighter estimates for items with weight significantly larger than ${N}/{b}$ in a stream of length $N$ and summary of size $b$. Berinde et al.~\cite{berinde_et_al} develop a general framework for the analysis of so called {\em heavy-tolerant} counter based algorithms and show that {\sc Frequent} falls in this class. In order to keep the paper self-contained instead of referring to the results in~\cite{berinde_et_al} we show why our algorithm provides much better guarantees for skewed distributions than suggested by Lemma~\ref{approx_lemma}. Also, the analysis we present is simpler since it does not apply to a wide class of counter based algorithms.

Let us first give some intuition why better approximation guarantees are possible. In order to bound the possible underestimation of an entry in the summary we assume that each decrement of a given entry is applied to at least $b$ distinct entries. Since the total weight of all entries is $\|C\|_{E_1}$ we obtain that the total weight charged to a given entry is bounded by ${\|C\|_{E_1}}/{b}$. If we had only $O(b)$ different entries all having weight about ${\|C\|_{E_1}}/{b}$, then the approximation error given by Lemma~\ref{approx_lemma} is tight. However, assume some entries have weight $\alpha \|C\|_{E_1}$ for $\alpha > {1}/{b}$. This means that the total weight we might have charged when decrementing the weights of groups of at least $b$ distinct entries can be bounded by $\|C\|_{E_1} - (\alpha - {1}/{b})\|C\|_{E_1}$. Iteratively applying the argument to all heavy entries we arrive at improved approximation guarantees. The results in \cite{berinde_et_al,bose_et_al} are based upon the above observation. 

\begin{lemmx}(Bose et al, \cite{bose_et_al}) \label{boseetal}
For an entry $(i,j)$ in $C = AB$ with weight $\alpha \|C\|_{E1}$, $\alpha b > 1$, after termination of {\sc ComputeSummary} it holds $\widehat{C_{ij}} \geq C_{ij} - (1-\alpha){\|C\|_{E1}}/{(b-1)}$ where $\widehat{C_{ij}}$ is the approximation of ${C_{ij}}$ returned by {\sc EstimateEntry}$(i,j)$. 
\end{lemmx}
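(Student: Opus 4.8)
The plan is to refine the charging argument behind Lemma~\ref{key_lemma} by exploiting the fact that a heavy entry absorbs a large fraction of the total weight, which leaves correspondingly less weight available to be ``charged against'' it during decrements. Fix the entry $(i,j)$ with $C_{ij} = \alpha\|C\|_{E1}$, and let $d = \sum_s \Delta_s$ denote the total amount by which the counter of $(i,j)$ is decremented over the whole run of {\sc ComputeSummary}, where the sum ranges over the elementary decrement steps $s$ (both the reduction of each outer product in line~5 and the summary update in line~11) that lower $(i,j)$ by some $\Delta_s > 0$. As in the proof of Lemma~\ref{correctness}, the reported value underestimates the true weight by at most this accumulated decrement, so $\widehat{C_{ij}} \ge C_{ij} - d$; moreover, since $\alpha b > 1$ makes $(i,j)$ a genuine $b$-heavy hitter, Lemma~\ref{key_lemma} guarantees its counter never reaches $0$, so it is never evicted and in fact $\widehat{C_{ij}} = C_{ij} - d$. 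It therefore suffices to prove $d \le (1-\alpha)\|C\|_{E1}/(b-1)$.

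The crux is a two-sided estimate of the decrement mass absorbed by the entries other than $(i,j)$. For the lower bound I would invoke the structural property already used in Lemmas~\ref{key_lemma} and~\ref{correctness}: every decrement step reducing $(i,j)$ by $\Delta_s$ simultaneously reduces at least $b-1$ other distinct entries by the same $\Delta_s$ (the iterative reformulation in Lemma~\ref{correctness} decrements all nonzero entries whenever more than $b$ of them are present). Hence each such step contributes at least $(b-1)\Delta_s$ to the total decrement applied to entries different from $(i,j)$, and summing over all steps touching $(i,j)$ gives a lower bound of $(b-1)\sum_s \Delta_s = (b-1)d$. For the upper bound I would use weight conservation: because the algorithm never drives a counter below $0$, the total amount ever subtracted from any single entry cannot exceed the total weight inserted into it. Summing this over all entries $e \ne (i,j)$ bounds their combined absorbed decrement by $\sum_{e \ne (i,j)} C_e = \|C\|_{E1} - C_{ij} = (1-\alpha)\|C\|_{E1}$, where $\alpha \le 1$ holds since $C_{ij}$ is a single nonnegative entry.

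Chaining the two estimates yields $(b-1)d \le (1-\alpha)\|C\|_{E1}$, hence $d \le (1-\alpha)\|C\|_{E1}/(b-1)$, and substituting into $\widehat{C_{ij}} \ge C_{ij} - d$ gives the stated bound.

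I expect the main obstacle to be the bookkeeping in the lower-bound step rather than any hard inequality. I must argue uniformly that every elementary decrement that touches $(i,j)$ really does act on a group of at least $b-1$ further distinct nonzero entries, and in particular that this holds for the merge-and-reduce step on $\mathcal{S}\cup\mathcal{L}$ (lines~6--11) and not only for the reduction of a single outer product. Both cases reduce to the one iterative procedure justified in Lemma~\ref{correctness}, so the real task is to phrase the co-decrement property so that it applies to each elementary subtraction regardless of which of the two reduction phases produced it; once that is in place, the weight-conservation bound and the final arithmetic are routine.
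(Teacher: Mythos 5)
Your proof is correct, but it reaches the bound by a genuinely different route than the paper. The paper's proof bootstraps Lemma~\ref{key_lemma}: it starts from the crude bound $\|C\|_{E1}/b$, observes that the surviving counter weight of $(i,j)$ (at least $(\alpha - 1/b)\|C\|_{E1}$) can never be part of the chargeable pool, recomputes the bound with the reduced pool, and iterates; the stated $(1-\alpha)\|C\|_{E1}/(b-1)$ emerges as the limit of the resulting geometric series $\sum_{i\ge 1}(1-\alpha)\|C\|_{E1}/b^i$. You instead give a one-shot double-counting argument: every elementary decrement $\Delta_s$ of $(i,j)$ forces at least $(b-1)\Delta_s$ of decrement onto \emph{other} entries, and the total decrement those other entries can absorb is capped by their aggregate inserted weight $(1-\alpha)\|C\|_{E1}$, giving $(b-1)d \le (1-\alpha)\|C\|_{E1}$ directly. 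Your version is arguably cleaner --- it avoids the limit and the slightly informal ``successively applying the reasoning $k$ times'' step --- and, since the iterative reformulation in Lemma~\ref{correctness} actually co-decrements at least $b$ other entries (not just $b-1$) in both the outer-product reduction and the summary merge, your argument even yields the marginally stronger denominator $b$; the paper's fixed-point iteration, on the other hand, is the form that generalizes to the residual-norm bound of Lemma~\ref{tail_lemma}, which is presumably why the author phrases it that way. Your bookkeeping concern is resolved exactly as you anticipate: both decrement phases are instances of the single iterative procedure of Lemma~\ref{correctness}, so the co-decrement property holds uniformly.
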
 
\begin{proof}
By Lemma~\ref{key_lemma} we have that the underestimation is at most $\frac{\|C\|_{E_1}}{b}$. As outlined above this implies that we have a total weight of at most $\|C\|_{E_1}(\alpha-\frac{1}{b})$ to charge from, which implies a new upper bound on the underestimation of $\|C\|_{E_1} - \|C\|_{E_1}(\alpha - \frac{1}{b}) = \frac{\|C\|_{E_1}(1-\alpha)}{b} + \frac{\|C\|_{E_1}}{b^2}$.
Successively applying the above reasoning $k$ times we can bound the underestimation to $\sum_{i=1}^k\frac{(1-\alpha)\|C\|_{E_1}}{b^i} + \frac{\|C\|_{E_1}}{b^k}$. Since $b> 1$ for $k \rightarrow \infty$ the claim follows. %\qed
\end{proof}

\begin{lemmx}(Berinde et al, \cite{berinde_et_al}) \label{tail_lemma}
${\|C\|_{E^k1}}/{(b-k)}$ is an upper bound on the underestimation of any ${C_{ij}} $ returned by {\sc EstimateEntry}$(i,j)$ for any $k\leq b$.
\end{lemmx}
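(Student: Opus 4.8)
The plan is to sharpen the charging argument behind Lemma~\ref{key_lemma} by giving the $k$ heaviest entries of $C$ a privileged role in the accounting. Let $T$ be the set of the $k$ entries of $C$ of largest weight, so that $\sum_{(i,j)\notin T} C_{ij} = \|C\|_{E^k1}$, and assume $k<b$ (for $k=b$ the bound is vacuous). As in the proof of Lemma~\ref{correctness}, I would view the entire execution of {\sc ComputeSummary} as a single idealized process consisting of $t$ successive steps, where step $i$ decrements the weight of at least $b$ distinct entries by a common amount $\Delta_i>0$. Exactly as in Lemma~\ref{key_lemma}, the total underestimation of any one entry is then bounded by $E:=\sum_{i=1}^t \Delta_i$, so it suffices to prove $E\le \|C\|_{E^k1}/(b-k)$.

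First I would bound, from below, how much weight the process removes from entries outside $T$. Since step $i$ touches at least $b$ distinct entries and $|T|=k$, at least $b-k$ of the entries decremented in step $i$ lie outside $T$, and each of them loses $\Delta_i$. Hence the total weight removed from entries outside $T$ over the whole run is at least $\sum_{i=1}^t (b-k)\Delta_i=(b-k)E$.

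Next I would bound the same quantity from above. Because all weights stay nonnegative throughout, the total amount ever subtracted from a single entry cannot exceed the total amount ever added to it, i.e.\ its final weight $C_{ij}$. Summing this inequality over all entries $(i,j)\notin T$ shows that the total weight removed from entries outside $T$ is at most $\sum_{(i,j)\notin T} C_{ij}=\|C\|_{E^k1}$. Combining the two bounds yields $(b-k)E\le \|C\|_{E^k1}$, and therefore $E\le \|C\|_{E^k1}/(b-k)$, which is the claimed upper bound on the underestimation.

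The main obstacle is justifying the idealized ``at least $b$ distinct entries decremented per step'' view uniformly across both places where weights are reduced --- the within-outer-product reduction (line~5) and the summary merge (line~11) --- and confirming that a single quantity $E$ simultaneously upper-bounds the underestimation of every entry; this is exactly the reduction already carried out in the proof of Lemma~\ref{correctness}, which I would invoke rather than redo. A secondary point to handle carefully is that $T$ is fixed to be the top-$k$ entries of the final product $C$, so the per-step count ``at most $k$ decremented entries lie in $T$'' holds with respect to this fixed set regardless of the order in which weights accumulate.
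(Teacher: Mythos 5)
Your proof is correct, but it reaches the bound by a genuinely different route than the paper. The paper's proof is an iterative fixed-point argument: it assumes an upper bound $\delta$ on the underestimation, refines it to $(k\delta + \|C\|_{E^k1})/b$ by noting that at most $k\delta$ of the removable weight comes from the top-$k$ entries, and then iterates $\delta_i = (k\delta_{i-1} + \|C\|_{E^k1})/b$, appealing to convergence of this recurrence to the fixed point $\|C\|_{E^k1}/(b-k)$. You instead make the argument in one shot: each idealized step decrements at least $b$ distinct entries, hence at least $b-k$ entries outside the fixed top-$k$ set $T$, so the weight drained from outside $T$ is at least $(b-k)E$; nonnegativity of the running weights caps that same quantity by $\sum_{(i,j)\notin T}C_{ij} = \|C\|_{E^k1}$, giving $(b-k)E \le \|C\|_{E^k1}$ directly. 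Your version buys a cleaner and arguably more rigorous proof --- it avoids the paper's somewhat loosely justified limit argument (the claim that the $\delta_i$ converge, and the slightly garbled remark about the underestimation being ``lower bounded by $\|C\|_{E^k1}$'') and replaces it with a single counting inequality; it is essentially the charging argument of Berinde et al.\ specialized to this setting. The paper's iterative formulation, on the other hand, is a natural continuation of the bootstrapping style already used in Lemma~\ref{boseetal} and of the intuition paragraph preceding it. Your two stated caveats are exactly the right ones, and both are discharged by the reduction in the proof of Lemma~\ref{correctness}: the algorithm's two weight-reduction steps (the within-outer-product truncation and the summary merge) are each shown there to be equivalent to repeatedly decrementing at least $b+1$ distinct entries by a common amount while all virtual weights remain nonnegative, which is all your accounting needs.
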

\begin{proof}
Assume that $\delta$ is an upper bound on the underestimation returned by {\sc EstimateEntry}. We apply again the above reasoning and since $k < b$ we see that the underestimation is bounded by $\frac{k \delta + \|C\|_{E^k1}}{b}$. We can continue iterating in this way setting $\delta_i = \frac{k \delta_{i-1} + \|C\|_{E^k1}}{b}$ while $\delta_i \leq \delta_{i-1}$ for the underestimation in the $i$th step holds. We either reach a state where no progress is made or, since the underestimation is lower bounded by $\|C\|_{E^k1}$, we have $\delta_i \rightarrow \delta$ as $i \rightarrow \infty$. In either case the claim follows. %\qed
\end{proof}
\\\\
The above lemmas are important since they yield approximation guarantees depending on the residual $k$-norm of the matrix product, thus for skewed matrix products the approximation is much better than the one provided by Lemma~\ref{approx_lemma}.
%
%
%
%\begin{proof}
%Assume that $\delta$ is an upper bound on the underestimation returned by {\sc Estimate Entry}. We apply again the above reasoning and since $k < b$ we see that the underestimation is bounded by $\frac{k \delta + \|C\|_{E^k_1}}{b}$. We can continue iterating in this way setting $\delta_i = \frac{k \delta_{i-1} + \|C\|_{E^k_1}}{b}$ while $\delta_i \leq \delta_{i-1}$ for the underestimation in the $i$th step holds. We either reach a state where no progress is made or, since the underestimation is lower bounded by $\|C\|_{E^k_1}$, we have $\delta_i \rightarrow \delta$ as $i \rightarrow \infty$. In either case the claim follows. \qed
%\end{proof}
%It should be noted that a more general version of the above lemma for the class of heavy tolerant algorithms can be found in \cite{berinde_et_al}.
%
%
%
% SPARSE RECOVERY
%
%
%
\\\\
{\bf Sparse recovery.}
The approximation of the matrix product $C=AB$ in \cite{drineas_et_al_mm,pagh_mm,sarlos_mm} is analyzed in terms of the Frobenius norm of the difference of $C$ and the obtained approximation $\widehat{C}$, i.e $\|C-\widehat{C}\|_F$. By simply creating a sparse matrix with all non-zero estimations in the summary we obtain an approximation of $C$: the so called $k$-sparse recovery of a frequency vector $\bf{f}$ aims at finding a vector $\bf{\widehat{f}}$ with at most $k$ non-zero entries such that the $p$-norm $\|{\bf f} - {\bf \widehat{f}}\|_p$ is minimized. 

As shown by Berinde et al.~\cite{berinde_et_al} the class of heavy-tolerant counter algorithms yields the best known bounds for the sparse recovery in the $p$-norm. The following Theorem 1 follows from Lemma~\ref{rtime_op} and their main result.%Their main result gives the following 

\begin{thmx}
Let $A, B$ be nonnegative $n \times n$ real matrices and $C = AB$ their product. There exists a one-pass approximation deterministic algorithm returning a matrix $\widehat{C}$ such that $\|C - \widehat{C}\|_{Ep} \leq (1+\varepsilon)^{\frac{1}{p}}({\varepsilon}/{k})^{1-\frac{1}{p}}\|C\|_{E^k1}$. The algorithm runs in time $O(n\cdot \text{Sort}(n) + (n{k})/{\varepsilon})$ and uses space $O(n+{k}/{\varepsilon})$ for any $0 < \varepsilon < 1$ and $k \geq 1$. 
\end{thmx}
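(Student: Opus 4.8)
The plan is to recognize the claimed bound as the standard $\ell_p$ sparse-recovery guarantee for the {\sc Frequent} algorithm, instantiated on the stream of the $n$ outer products, and then to read off the running time and space from the per-outer-product cost of Lemma~\ref{rtime_op}. First I would make the reduction explicit: running {\sc ComputeSummary} is exactly running {\sc Frequent} with $b$ counters on the insert-only stream whose items are the matrix positions $(i,j)\in[n]\times[n]$ and whose frequency vector is the flattened product $C$, so that $\|\mathbf{f}\|_1=\|C\|_{E1}$. Lemma~\ref{correctness} already certifies that our outer-product batching produces exactly the same summary as the iterative ``subtract the smallest weight'' process, i.e. the genuine {\sc Frequent} summary, so nothing is lost by processing a whole outer product at once.

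Next I would invoke the heavy-tolerance of this summary. By Lemma~\ref{tail_lemma} the underestimation of every entry is at most $\|C\|_{E^k1}/(b-k)$ for every $k\le b$, which is precisely the residual-$\ell_1$ per-counter error that places {\sc Frequent} in the class of heavy-tolerant counter algorithms analyzed by Berinde et al.~\cite{berinde_et_al}. I would then form $\widehat{C}$ by writing the at most $b$ summary estimates into their matrix positions and zeros elsewhere, and appeal to their main sparse-recovery theorem, which upgrades a per-entry residual error of $\|C\|_{E^k1}/(b-k)$ into the $\ell_p$ bound $(1+\varepsilon)^{1/p}(\varepsilon/k)^{1-1/p}\|C\|_{E^k1}$. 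The only quantitative step is the parameter bookkeeping: choosing $b=\lceil k(1+\varepsilon)/\varepsilon\rceil=O(k/\varepsilon)$ makes $b-k=k/\varepsilon$, so the per-entry error equals $\varepsilon\|C\|_{E^k1}/k$; since $\varepsilon<1$ we have $b>k$, so Lemma~\ref{tail_lemma} indeed applies, and substituting this error into their bound yields the stated inequality.

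Finally I would account for the resources. Each of the $n$ outer products is processed in time $O(b+\text{Sort}(n))$ by Lemma~\ref{rtime_op}, so building the summary over all $n$ products costs $O(nb+n\cdot\text{Sort}(n))=O(n\cdot\text{Sort}(n)+nk/\varepsilon)$, and the summary together with the stored column and row vectors occupies $O(b+n)=O(n+k/\varepsilon)$ space, all in a single pass; this is exactly Lemma~\ref{approx_lemma} specialized to the chosen $b$.

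The hard part will be the reduction step rather than any new calculation: I must be confident that the batched, rank-$b$ truncation performed once per outer product is behaviorally identical to element-wise {\sc Frequent}, so that the black-box theorem of Berinde et al.\ transfers verbatim to our setting. This is exactly what Lemmas~\ref{correctness} and~\ref{tail_lemma} are designed to guarantee. The remaining $\ell_p$ derivation is entirely contained in~\cite{berinde_et_al}, so beyond fixing $b=\Theta(k/\varepsilon)$ and verifying $k\le b$ there is no genuinely new analytic content to supply.
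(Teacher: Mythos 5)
Your proposal is correct and follows essentially the same route as the paper: the same summary construction via Lemma~\ref{rtime_op}, the same choice $b = k + k/\varepsilon$ turning Lemma~\ref{tail_lemma} into a per-entry error of $\varepsilon\|C\|_{E^k1}/k$, and the same resource accounting. The only difference is that you outsource the final $\ell_p$ upgrade to the main theorem of Berinde et al.~\cite{berinde_et_al}, whereas the paper writes out the two-line calculation itself (splitting $(\|C-\widehat{C}\|_{Ep})^p$ into the top-$k$ terms, each at most $(\varepsilon\|C\|_{E^k1}/k)^p$, and the tail, bounded by $(\varepsilon/k)^{p-1}(\|C\|_{E^k1})^p$) --- a presentational rather than substantive distinction, since the paper itself states that the theorem follows from that external result.
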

\begin{proof}
Let us write the entries of $C$ in ascending order according to their weight as $C_0, C_1,\ldots,C_{n^2-1}$, thus $\|C\|_{E^kp} = (\sum_{\ell = k}^{n^2-1}C_\ell^p)^\frac{1}{p}$. Observe that by setting $b = k + \frac{k}{\varepsilon}$ the bounds given by Lemma~\ref{tail_lemma} yield an underestimation for every entry of at most $\frac{\varepsilon\|C\|_{E^k1}}{k}$. Thus, we can upper bound $(\|C - \widehat{C}\|_{Ep})^p$ as $k (\frac{\varepsilon\|C\|_{E^k1}}{k})^p + \sum_{\ell = k}^{n^2-1} |C_\ell - \widehat{C_\ell}|(\frac{\varepsilon}{k})^{p-1}(\|C\|_{E^k1})^{p-1}$. The last term is at most $(\frac{\varepsilon}{k})^{p-1}(\|C\|_{E^k1})^p$, thus we can upper bound $(\|C - \widehat{C}\|^p)^p$ to $(1+\varepsilon)\frac{\varepsilon}{k}(\|C\|_{E^k1})^p$. The time and space complexity follow directly from Lemma~\ref{rtime_op}.~\end{proof}
\\\\
Clearly, for $k/\varepsilon = o(n^{2})$ the algorithm runs in subcubic time and subquadratic memory. In the next paragraph we show that for skewed output matrices {\sc EstimateEntry} can provably detect the most significant entries even for modest summary sizes.
%
%
% ZIPFIAN
%
%
%
\\\\
{\bf Zipfian distributions.}
As discussed in Section~\ref{skew_appl} the assumption that the entries in the product adhere to a Zipfian distribution is justified. The results stated below not only give a better understanding of the approximation yielded by the algorithm, but also allow direct comparison to related work. 

\begin{lemmx}(Berinde et al, \cite{berinde_et_al})
If the entries weights in the product matrix follow a Zipfian distribution with parameter $z>1$, then {\sc EstimateEntry} with a summary of size $b$ 
\begin{enumerate}
\item approximates the weight of all entries with rank $i \leq b$ with additive error of 
$(1 - \frac{1}{\zeta(z)i^z})\frac{\|C\|_{E1}}{b-1}$.
\item estimates the weight of all entries with additive error of $\varepsilon \|C\|_{E1}$ for $b =  O((\frac{1}{\varepsilon})^{\frac{1}{z}})$.
\item returns the largest $k$ entries in the matrix product for $b = O(k)$.
\item returns the largest $k$ entries in a correct order for $b=\Omega(k(\frac{k}{z})^{\frac{1}{z}})$.
\end{enumerate}
\end{lemmx}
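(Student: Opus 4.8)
The plan is to obtain the four parts as successively sharper specializations of the two underestimation bounds already in hand, namely Lemma~\ref{boseetal} (the per-entry bound of Bose et al.) and Lemma~\ref{tail_lemma} (the residual-norm bound of Berinde et al.), evaluated on the Zipfian weights $w_i = \|C\|_{E1}/(\zeta(z)\, i^z)$ of the entry of rank $i$. Part~1 is then immediate: substituting $\alpha = 1/(\zeta(z) i^z)$ into Lemma~\ref{boseetal}, the entry of rank $i\le b$ has weight $\alpha\|C\|_{E1}$, so its underestimation is at most $(1-\alpha)\|C\|_{E1}/(b-1) = (1 - 1/(\zeta(z)i^z))\|C\|_{E1}/(b-1)$, exactly the claimed bound, with the hypothesis $\alpha b>1$ confining attention to the heavier ranks.

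For part~2 the point is that for Zipfian weights the residual norm is genuinely small: $\|C\|_{E^{k}1} = (\|C\|_{E1}/\zeta(z))\sum_{i>k} i^{-z} = O(\|C\|_{E1}\, k^{1-z})$, using the tail estimate $\sum_{i>k} i^{-z}=O(k^{1-z})$ recorded in the preliminaries together with $\zeta(z)=\Theta(1)$ for $z>1$. Feeding $k=b/2$ into Lemma~\ref{tail_lemma} gives a uniform underestimation bound $\|C\|_{E^{b/2}1}/(b/2) = O(\|C\|_{E1}\, b^{-z})$. Requiring this to be at most $\varepsilon\|C\|_{E1}$ and solving for $b$ yields $b = O((1/\varepsilon)^{1/z})$.

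Parts~3 and~4 both pit this error, written $E := O(\|C\|_{E1}\, b^{-z})$, against the shape of the Zipfian profile, and this is where the actual work sits. For part~3 it suffices that each of the $k$ heaviest entries keeps a strictly positive estimate, hence is stored in and recovered from the summary; since the rank-$k$ estimate is at least $w_k - E$, I would demand $E < w_k = \|C\|_{E1}/(\zeta(z)k^z)$, and $E=O(\|C\|_{E1}b^{-z})$ already secures this for $b=\Omega(k)$, so $b=O(k)$ suffices. For part~4 correct ordering further requires the rank-$i$ estimate to beat the rank-$(i+1)$ estimate for every $i<k$; as both are underestimates within $E$, it is enough that $E$ be below every consecutive gap $w_i-w_{i+1}$, the tightest of which is at $i=k-1$. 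Using the mean-value estimate $i^{-z}-(i+1)^{-z}=\Theta(z\, i^{-z-1})$, the binding gap is $\Theta(z\|C\|_{E1} k^{-z-1}/\zeta(z))$; forcing $E$ beneath it gives $b^{z}=\Omega(k^{z+1}/z)$, i.e. $b=\Omega(k(k/z)^{1/z})$.

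The main obstacle, and the step I would treat most carefully, is part~4: one must carry the factor $z$ through the gap estimate $i^{-z}-(i+1)^{-z}\approx z\, i^{-z-1}$ to land precisely on the $(k/z)^{1/z}$ form rather than a coarser $k^{1+1/z}$ bound, and one must argue that controlling only the single smallest gap (at rank $k-1$) already propagates to guarantee the entire ordering. A secondary subtlety is fixing what \emph{returns the largest $k$ entries} means operationally in part~3: I would read it as the guarantee that no top-$k$ entry is missed, each being stored with positive estimated weight, since pinning down the top-$k$ as an exact set against the remaining summary slots would instead invoke the stronger gap condition proved for part~4.
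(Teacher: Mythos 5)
Your proposal follows essentially the same route as the paper: part 1 by direct substitution into Lemma~\ref{boseetal}, part 2 by taking $k=b/2$ in Lemma~\ref{tail_lemma} and the Zipfian tail bound $\sum_{i>k}i^{-z}=O(k^{1-z})$ to get an error of $O(\|C\|_{E1}b^{-z})$, part 3 by comparing that error to the rank-$k$ weight, and part 4 by bounding the smallest consecutive gap $1/k^z-1/(k+1)^z=\Theta(zk^{-z-1})$ and solving for $b$. The reasoning and the resulting bounds coincide with the paper's proof, and your added care about carrying the factor $z$ and about which gap is binding only makes the argument more explicit.
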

\begin{proof}
The first statement is a direct application of Lemma~\ref{boseetal}. For 2. we choose $k=b/2$ in Lemma~\ref{tail_lemma}. The error is then bounded by $\frac{2\|C\|_{E^{b/2}1}}{b}$. By plugging in the bounds on the tail of the Zipfian distribution we obtain that the error is bounded by $O(\frac{\|C\|_{E1}}{b^z})$. This implies that there exists $b = c_bk$ such that all entries of weight $\frac{\|C\|_{E1}}{c_bk^z}$, for some constant $c_b$, will be in the summary. This shows 3. Since the entry of rank $k$ has weight $\frac{\|C\|_{E1}}{\zeta(z)k^z}$ and we assume that $\zeta(z)$ is constant.
For the last statement we observe that the function $f(x) = \frac{1}{x^z} - \frac{1}{(x+1)^{z}}$ is monotonically decreasing for $x>0$ and $z>1$. Thus we need an error of at most $2\delta < \|C\|_{E1}(\frac{1}{k^z} - \frac{1}{(k+1)^{z}})$. With some algebra we can bound the difference to $O(\frac{z\|C\|_{E1}}{k(k+1)^z})$, thus by $2.$ the claim follows. %\qed
\end{proof}

%
%
%
%
% COMPARISON TO PREVIOUS WORK
%
%
%
%

\subsection{Comparison to previous work.}

The randomized algorithm by Cohen and Lewis~\cite{cohen_lewis_mm} for computing the product of nonnegative matrices yields an unbiased estimator of each entry and a concentration around the expected entry weight with high probability. However, their algorithm requires a random walk in a bipartite graph of size $\Theta(n^2)$ space and is thus not space efficient. It is difficult to compare the bounds returned by {\sc EstimateEntry} to the bounds obtained in~\cite{drineas_et_al_mm,sarlos_mm}, but it is natural to compare the guarantee of our estimates to the ones shown by Pagh~\cite{pagh_mm}.

\begin{figure}[ht]
\begin{center}

\includegraphics[scale=0.55]{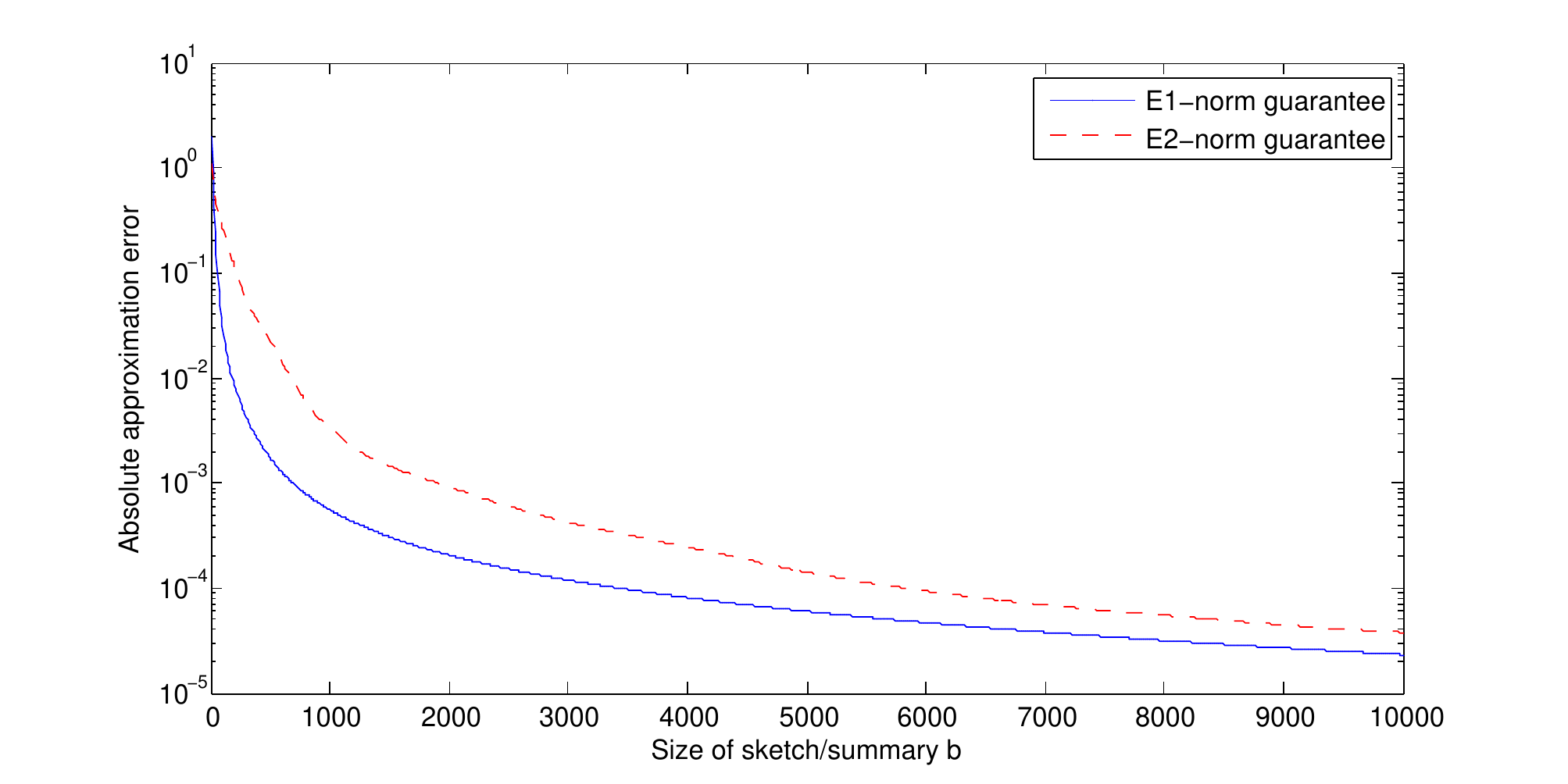}

%\vspace{-5mm}
\end{center}
\caption{We compare the theoretical error guarantee achievable by Pagh's~\cite{pagh_mm} and our algorithm for the estimation of individual entries for sketch of size $b$ for the dataset accidents. }\label{fig:pagh_vs_kk}
\end{figure}

The approximation error of the matrix estimation $\widehat{C}$  in~\cite{pagh_mm}, $\|C - \widehat{C}\|_F$, is bounded by ${(n\|C\|_F)}/{\sqrt{b}}$ with high probability. The running time is $O(n^2\log n + b\log b \log n)$ and space usage is $O(n + b\log n)$. Our deterministic algorithm achieves an error guarantee of $(1+\varepsilon)({\varepsilon}/{k})^{1-\frac{1}{p}}\|C\|_{E^k1}$ for the approximation $\|C-\widehat{C}\|_{Ep}$ for any $p>0$. For a direct comparison we set $p=2$, $k=0$ and $b = \lceil {1}/{\varepsilon}\rceil$ and obtain an approximation error of ${\|C\|_{E1}}/{\sqrt{b}}$ which is at most ${(n\|C\|_F)}/{\sqrt{b}}$ by Cauchy-Schwarz inequality. The time and space complexity of our algorithm is a polylogarithmic factor better. Note also that the approximation guarantee does not depend on the dimension $n$ as in~\cite{pagh_mm}.

For individual entries we achieve an error bounded by $\text{min}_{k\in [b]} {\|C\|_{E^k1}}/{(b-k)}$ while~\cite{pagh_mm} shows that the error of the obtained estimates is bounded by ${\|C\|_{E^{b/\kappa}1}}/{\sqrt{b}}$ for a suitably chosen constant $\kappa > 2$. %For Zipfian distribution with $z>1$ $\frac{n\|C\|_F}{\sqrt{b}} = O(nb^{1/2-2z}\|C\|_{E_1})$

Assuming Zipfian distribution with $z>1$ the approximation error of the Frobenius norm of the matrix product in~\cite{pagh_mm} is bounded by $O(nb^{-z}\|C\|_{E1})$ with high probability. By setting $k=0$ our deterministic algorithm achieves $O({\|C\|_{E1}}/{\sqrt{b}})$ for the Frobenius norm approximation error. %Observe however that by multiplying the summary size by $k$ the error in~\cite{pagh_mm} decreases as $k^{-z}$, while the error of our algorithm decreases at a slower rate slower as $k^{\frac{1}{2}-z}$.   
For an $\varepsilon\|C\|_{E1}$-approximation of individual entries both~\cite{pagh_mm} and our algorithm need a data structure, a sketch or a summary, of size $O(({1}/{\varepsilon})^{\frac{1}{z}})$ but~\cite{pagh_mm} needs to run $O(\log n)$ copies of the algorithm in parallel in order to guarantee that the estimates are correct with high probability. Figure~\ref{fig:pagh_vs_kk} plots the additive approximation achieved for different summary sizes for the accidents dataset. The dashed line gives the theoretical guarantees on the approximation error shown in~\cite{pagh_mm}, depending on the entrywise 2-norm of the matrix product, and the solid line the approximation we achieve (depending on the entrywise 1-norm).   In order to achieve guarantees with high probability Pagh's algorithm needs to run $O(\log n)$ copies of the algorithm in parallel, thus increasing the time and space complexity by a factor of $O(\log n)$.  
However, Pagh's algorithm achieves better bounds for lighter skew when $1/2 < z < 1$ and more important it is not restricted to nonnegative input matrices.

%
%
%
% 
%
%
%

%\paragraph{Adjusting the algorithm for arbitrary matrix products.}
%
%%{\em The algorithm also applies to the case when all entries in the $k$th column of $A$ and the $k$th row of $B$ have the same sign. Does this describe some natural case?}
%
%
%One can extend the algorithm in a natural way to handle also matrices with negative entries. We can write the product of $A$ and $B$ as $C = C^+ - C^-$ where for each $C_{ij}$ we have $C^+_{ij} = \sum_{k^+} A_{ik^+}B_{k^+j}$ where $k^+ \in [k]$ such that $A_{ik^+}B_{k^+j} \geq 0$ and $C^-_{ij} = \sum_{k} A_{ik}B_{kj} - C^+_{ij}$, i.e. the positive and negative contribution to the product.  For each outer product $uv$ divide we consider the four combinations from the positive and negative entries in $u$ and $v$: $u^+$, $u^-$, $v^+$ and $v^-$. Algorithm~\ref{main_alg} can now be run on each of the four outer products.  We keep two summaries for the positive and negative updates: $S^+$ and $S^-$, the only difference is that for an outer product $uv$ we will need to update each summary twice: $S^+$ is updated by the products $u^+v^+$ and $u^-v^-$, and $S^-$ by $-u^+v^-$ and $-u^-v^+$. Thus, we can find the largest entries and approximate all entries in the matrices $C^+$ and $C^-$. As observed by Cohen and Lewis~\cite{cohen_lewis_mm} this will give good results in some real life applications but in general this is a rather weak guarantee since $\|C\|_{E_1} << \|C^+\|_{E_1} + \|C^-\|_{E_1}$ is quite possible, the obvious example being $C = I = AA^{-1}$ where many cancellations are expected.

%
%
%
% SECOND ALGORITHM
%
%
%
%

\section{An algorithm for arbitrary real-valued matrices}

In this section we show how to efficiently extend the deterministic streaming algorithm sketched in~\cite{hot,muthu_survey} to matrix multiplication. The algorithm in~\cite{hot,muthu_survey} works for streams in the non-strict turnstile model where updates are of the form $(i, v)$ for an item $i$ and $v \in \mathbb{R}$. 

Let us first consider the generalized Majority algorithm for the non-strict turnstile model. We want to find an item whose absolute total weight is more than half of the absolute sum of total weights of all other items in the stream. Recall that in the non-strict turnstile model an item is allowed to have negative weight after processing the stream. We adapt the key observation in~\cite{hot} to obtain

\begin{lemmx} \label{majority_bits}
A majority item in a real weighted stream over a domain of size $m$ in the non-strict turnstile model can be determined in two passes using $\lceil \log m \rceil + 1$ counters and processing time per item arrival.
\end{lemmx}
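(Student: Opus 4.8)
The plan is to mimic the bit-level group testing behind the generalized \textsc{Majority} algorithm of~\cite{hot}. I would represent each item index by its $\lceil \log m\rceil$-bit binary expansion and maintain one counter $c_0$ holding the running signed sum $\sum_i f_i$ of all updates, together with one counter $c_\ell$ for each bit position $\ell \in \{1,\dots,\lceil\log m\rceil\}$ that accumulates the signed weight $\sum_{i:\,\mathrm{bit}_\ell(i)=1} f_i$ of exactly those items whose $\ell$-th bit is set. On an update $(i,v)$ I add $v$ to $c_0$ and to every $c_\ell$ for which the $\ell$-th bit of $i$ equals $1$; this touches at most $\lceil\log m\rceil+1$ counters, which gives the claimed counter budget and $O(\log m)$ processing time per arrival.

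The heart of the argument is a decoding rule that reconstructs the majority index bit by bit. Writing $j$ for the (unique) majority item and $R=\sum_{i\neq j}|f_i|<|f_j|$ for the combined absolute weight of the rest, I would observe that $c_0-c_\ell=\sum_{i:\,\mathrm{bit}_\ell(i)=0}f_i$, so that the majority weight $f_j$ lands entirely in $c_\ell$ when its $\ell$-th bit is $1$ and entirely in $c_0-c_\ell$ when that bit is $0$. A triangle-inequality estimate then yields $|c_\ell|-|c_0-c_\ell|\ge |f_j|-R>0$ precisely in the former case, and the reverse inequality in the latter. Hence I decode the $\ell$-th bit of $j$ as $1$ iff $|c_\ell|>|c_0-c_\ell|$, and reading off all $\lceil\log m\rceil$ bits produces the candidate index $j^*$ at the end of the first pass.

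The role of the second pass is to turn the recovered index into a certified answer: since each counter mixes the contributions of many items, the exact weight $f_{j^*}$ cannot be extracted from the $c_\ell$ alone, so I would make one further pass accumulating only the updates addressed to $j^*$, obtaining $f_{j^*}$ exactly and confirming it against $c_0$ (a genuine majority must satisfy $|f_{j^*}|>|c_0-f_{j^*}|$). I expect the main obstacle to be the correctness of the decoding step rather than the bookkeeping: because weights may be negative, the argument must control both the sign and the magnitude of each counter simultaneously, and it hinges entirely on the defining inequality $|f_j|>R$, which guarantees that in every bit-counter the majority item dominates the aggregated noise of all other items. Once this domination is established, the bit-by-bit recovery and the two-pass structure follow directly.
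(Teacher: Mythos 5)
Your proof is correct and follows essentially the same route as the paper: one counter per bit position plus a running total, bit-by-bit decoding of the majority candidate from the domination inequality $|f_j|>\sum_{i\neq j}|f_i|$, and a second pass to compute the candidate's exact weight. The only (cosmetic) difference is that you replace the paper's sign-based case analysis for each bit (first fixing the sign of $w$, then comparing the two groups) with the single uniform test $|c_\ell|>|c_0-c_\ell|$ justified by the triangle inequality; both decoding rules are valid.
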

\begin{proof}
Assume the items are integers over some finite domain $[m]$. We form $\lceil \log m \rceil$ groups $g_k$, $0 \leq k \leq \lceil \log m \rceil -1$, one for each bit of the binary representation for each item $i \in [m]$. For each arriving item $i$ we compute its $k$th bit and if it is 1 we add $i$'s weight to the counter $c_k$ of the $k$th group. We also maintain the total weight of the stream seen so far in a global variable $w$. After the stream has been processed we first determine the sign of $w$.  Each item either increases the weight of the $k$th group or does not change it. Since the set of items is divided in two disjoint subsets for the $k$th group depending on the $k$th bit of each item, given $w$ and $c_k$ we can also determine the total weight of items whose $k$th bit is equal to 0, call it $w^k_0$. First observe that an item with absolute weight more than $|w|/2$ must have the same sign as $w$. Thus, if the weights in the two groups for the $k$th bit have different signs we determine the $k$th bit of the potential majority item. Otherwise, if the 0-group and the 1-group have weight with equal sign, the majority element must be in the subset with larger absolute weight. We show this by contradiction. Set $w^k_1:=c_k$. Assume w.l.o.g. that both groups have positive weights  $w^k_0>0$, $w^k_1>0$, $w^k_0 \geq w^k_1$ and the $k$th bit of the majority item is 1. Let the weight of the majority item $m$ be $w_m$. We have $w^k_1 \geq w_m - w^k_{-,1}$ where $w^k_{-,1}$ is the absolute value of the total contribution of entries with negative total weight and $k$th bit 1. The last inequality follows from the fact that $w_m$ is positive and the sum of weights of items with positive total weight is lower bounded by $w_m$. But this implies $w_m \leq w^k_1 + w^k_{-,1} \leq w^k_0 + w^k_{-,1}$ which contradicts that $m$ is the majority item.  Therefore it must hold $w^k_0 < w^k_1$.

After the first pass we can construct a candidate majority item from the $\lceil \log m \rceil$ bits found. A second pass over the stream will reveal whether indeed the candidate is a majority item.  %\qed
\end{proof}
\\\\
We generalize the above algorithm to finding a majority entry in a matrix product. Generalizing again builds upon the column row method.
\begin{lemmx} \label{maj_matrix}
A Majority entry in a matrix product can be computed in time $O(n^2\log n)$, linear space and two passes over the input. 
\end{lemmx}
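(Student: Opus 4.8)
The plan is to combine the column–row method with the bit-group majority algorithm of Lemma~\ref{majority_bits}, treating the matrix product $C = AB$ as a turnstile stream over the entry universe $[n] \times [n]$, which has size $m = n^2$. The key structural fact is that each outer product $A_{*,i} B_{i,*}$ contributes a weight $u_p v_q$ to entry $(p,q)$, and we must route this contribution to the appropriate bit-group counters. Since $m = n^2$, the binary representation of an entry index uses $\lceil \log n^2 \rceil = 2\lceil \log n \rceil$ bits, so we maintain $2\lceil \log n \rceil + 1$ counters: two blocks of $\lceil \log n \rceil$ counters (one block for the row index $p$, one for the column index $q$) plus the global total-weight counter $w$. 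Each counter $c_k$ accumulates the total weight of all entries whose $k$th index bit equals $1$, and correctness of the majority identification then follows verbatim from Lemma~\ref{majority_bits} applied to the $n^2$-sized domain.

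The main work is showing that each of the $n$ outer products can be folded into all $O(\log n)$ counters in $O(n \log n)$ time, giving the claimed $O(n^2 \log n)$ total. First I would observe that the contribution of outer product $A_{*,i} B_{i,*}$ to the counter $c_k$ for a row-index bit $k$ is
\begin{equation}
c_k \mathrel{+}= \Bigl(\sum_{p : \text{bit}_k(p) = 1} u_p\Bigr)\Bigl(\sum_{q} v_q\Bigr),
\end{equation}
and symmetrically for a column-index bit. The crucial point is that this factorizes: for a fixed bit position $k$, the row-index sum $\sum_{p : \text{bit}_k(p)=1} u_p$ depends only on the column vector $u = A_{*,i}$, and the full sum $\sum_q v_q$ depends only on the row vector $v = B_{i,*}$. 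Thus for each outer product I precompute, in $O(n \log n)$ time, the $\lceil \log n \rceil$ bit-partitioned partial sums of $u$ and the single total $\|v\|_1$-style sum of $v$ (and symmetrically the bit-partitioned sums of $v$ against the total of $u$), and then update each of the $O(\log n)$ counters in $O(1)$ time. The global counter $w$ is updated by $(\sum_p u_p)(\sum_q v_q)$ in $O(n)$ time. Summing over the $n$ outer products gives $O(n \cdot n \log n) = O(n^2 \log n)$ time, and the space is $O(n)$ for holding one column and one row at a time plus $O(\log n)$ counters.

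After the first pass I reconstruct the candidate majority entry $(p^*, q^*)$ from the $2\lceil \log n \rceil$ recovered bits exactly as in Lemma~\ref{majority_bits}, and a second identical pass over the input verifies whether the candidate truly carries more than half the total absolute weight by isolating its contribution. \textbf{The step I expect to be the main obstacle} is arguing cleanly that the counter updates factorize across the two index blocks without double-counting or losing the turnstile sign information: one must verify that the bit-group decomposition of the product universe $[n] \times [n]$ into row-bits and column-bits is exactly a refinement of the flat $[n^2]$ bit-decomposition that Lemma~\ref{majority_bits} assumes, so that its majority-recovery argument transfers unchanged. Once that correspondence is established, the per-outer-product cost analysis is routine, and correctness is inherited directly from the scalar case.
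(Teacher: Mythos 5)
Your proposal is correct and follows essentially the same route as the paper: both exploit the column--row decomposition and the fact that, under the numbering $i\cdot n + j$ of entries, each index bit is determined by either the row index or the column index alone, so the per-bit group weight factorizes as a partial sum of one vector times the full sum of the other, giving $O(n)$ work per bit and $O(n^2\log n)$ overall. The ``main obstacle'' you flag is exactly what the paper's $i2^\ell + j$ numbering resolves, and the sign/turnstile issue is already absorbed by Lemma~\ref{majority_bits}.
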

\begin{proof}
Let us assume $n = 2^\ell$ for some integer $\ell > 0$. We number the $n^2$ entries of the matrix product as $0,1,\dots, n^2-1$ such that the entry in the position $(i,j)$ is assigned a number $i2^\ell + j$, $0 \leq i,j \leq n-1$. Now observe that each entry of the matrix product consists of $2\ell$ bits and the term $j$ determines only the least significant $\ell$ bits while the term $i2^\ell$ determines the most significant $\ell$ bits. In the sequence $0,1,\ldots,n^2-1$ the elements having 1 in the $k$th position, $0 \leq k \leq 2\ell -1$, are the ones in positions $\{2^{k}+i2^{k+1},\dots, (i+1)2^{k+1} - 1\}$, $0 \leq i \leq  2^{2\ell-k-1}$. Thus, given a column vector $a$ of $A$ and a row vector $b$ of $B$ the entries in the outer product $ab$ with the $k$th bit equal to 1 are uniquely determined by the position of the contribution from either $a$ or $b$. More concretely, for $k < \ell$ and $k \geq \ell$ these are the entries in positions $\{2^{k}+i2^{k+1},\dots, (i+1)2^{k+1} - 1\}$, $0 \leq i \leq 2^{\ell-k-1}$, in $a$ and $b$, respectively.
  
Thus, to find the total contribution of entries weights with a bit set to 1 in a given position we need to simply consider only the entries in the corresponding interval and by the distributive law for the ring of real numbers we can compute the total contribution to the group for the $i$th 1-bit in $O(n)$ steps. Writing the matrix product as a sum of $n$ outer products and applying to the resulting stream the Majority algorithm from Lemma~\ref{majority_bits} with the above outlined modification yields the claimed running time. %\qed
\end{proof}

\begin{figure}[h!]
\Kw{\sc ComputeGroups}%\caption{Update Group}\label{second_alg}
\algsetup{indent=2em}
\begin{algorithmic}[1]
\REQUIRE matrices $A, B \in \mathbb{R}^{n\times n}$, an array of primes $P$, integer $b$ 
\STATE Create data structures $Q$ of size $X \times Y$ and $G$ of size $X \times Y \times L$ for $X = c_xb \log n, Y = c_yb\log n\log b, L = 2\ell$ for suitably chosen constants $c_x$ and $c_y$ and $\ell = \lceil \log n \rceil$
\FOR{$i \in [n]$}
\STATE $a= A_{*,i}$, $b = B_{i,*}$
\FOR{$j := 0 $ to $|P|$}
\STATE prime $p := P[j]$
\STATE $p_a = \sum_{t=0}a_tx^{t\cdot n \text{ mod } p}$
\STATE $p_b = \sum_{t=0} b_t x^{t \text{ mod } p}$
\STATE $p_{ab} = FFT(p_a, p_b)$
\STATE $\overline{p_{ab}} = p_{ab} \text{ mod } x^p + p_{ab} \text{ div } x^p$
\FOR{$m \in [p]$}
\STATE $Q[j][m] := Q[j][m] + c_m$ for $c_mx^m \in \overline{p_{ab}}$
\ENDFOR
\FOR{ $k \in [2\ell]$}
\IF{$k \leq \ell$}
\STATE set to 0 the entries $a_t$ for $t \in \{i2^{k+1},\dots, (i+1)2^{k+1} -2^k - 1\}$, $0 \leq i \leq  2^{2\ell-k-1}$%coefficients in $p_a$ in front of $x^{t\cdot n \text{ mod } p}$ for $t \in \{i2^{k+1},\dots, (i+1)2^{k+1} -2^k - 1\}$, $0 \leq i \leq  2^{2\ell-k-1}$
\ELSE
\STATE set to 0 the entries $b_t$ for $t \in \{i2^{k+1},\dots, (i+1)2^{k+1} -2^k - 1\}$, $0 \leq i \leq  2^{2\ell-k-1}$%coefficients in $p_b$ in front of $x^{t \text{ mod } p}$ for $t \in \{i2^{k+1},\dots, (i+1)2^{k+1} -2^k - 1\}$, $0 \leq i \leq  2^{2\ell-k-1}$
\ENDIF
\STATE $p_a = \sum_{t=0}a_tx^{t\cdot n \text{ mod } p}$
\STATE $p_b = \sum_{t=0} b_t x^{t \text{ mod } p}$
\STATE $p_{ab} = FFT(p_a, p_b)$
\STATE $\overline{p_{ab}} = p_{ab} \text{ mod } x^p + p_{ab} \text{ div } x^p$
\FOR{$m \in [p]$}
\STATE $G[j][m][k] = G[j][m][k] + c_m$ for $c_mx^m \in \overline{p_{ab}}$ %add to the $i$th subgroup of the group labelled by $p_{m}$ the coefficient in front of $x^m$
\ENDFOR
\ENDFOR
\ENDFOR
\ENDFOR
\end{algorithmic}
%\end{procedure}
\bigskip
%\begin{procedure}[h!]
\Kw{\sc CheckCandidates}
\algsetup{indent=2em}
\begin{algorithmic}[1]
\REQUIRE Data structures $G$ and $Q$, array of prime numbers $P$
\STATE Initialize a $X \times Y$ matrix $K$ with $2\ell$-bitstrings for $X = c_xb \log b, Y = c_yb\log n\log b$
\FOR{$j := 0 $ to $|P|$}
\STATE prime $p := P[j]$
\FOR{$m \in [p]$}
%\STATE $K[p, m]$
\FOR{$k \in [2\ell]$}
\STATE Set $k$th bit of $K[j][m]$ depending on the values $Q[j][m]$ and $G[j][m][k]$
\ENDFOR
\ENDFOR
\ENDFOR
\FOR{$\kappa \in K$}
\STATE set $i$ from the first $\ell$ bits in $\kappa$
\STATE set $j$ from the last $\ell$ bits in $\kappa$
\STATE Check the value of $AB(i, j)$
\ENDFOR
\end{algorithmic}
\caption{The function {\sc ComputeGroups} distributes the entries to different groups depending on their {\em mod} value for a prime number $p$. In each group we update $2\ell + 1$ subgroups as outlined in Lemma~\ref{maj_matrix}. In {\sc CheckCandidates} we construct a candidate entry for each group and find its exact weight in the matrix product.} \label{alg2}
\end{figure}

Theorem 14 in~\cite{muthu_survey} presents a generalization of the Majority algorithm for non-strict turnstile data streams where at most $k$ items have a value significantly different from 0 after processing the stream. We combine the approach with the technique presented by Pagh~\cite{pagh_mm} to obtain our main theorem for the multiplication of arbitrary real valued matrices. The proof of the theorem analyzes the complexity of the algorithm in Figure~\ref{alg2}.

\begin{thmx}
Let $A, B$ be real $n \times n$ matrices and $C = AB$ their product.  If the absolute weight of each of the $b$ entries with largest absolute weight is bigger than $\|C\|_{E^b1}$, then there exists a deterministic algorithm computing the $b$ heaviest entries exactly in time $O(n^2 + nb^2 \log^3 n \log^2 b)$ and space $O(n + b^2 \log^3 n \log b)$ in two passes over the input. 
\end{thmx}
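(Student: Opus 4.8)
The plan is to combine the bit-by-bit majority procedure of Lemma~\ref{maj_matrix} with a deterministic group-testing scheme based on prime moduli, computing the per-bucket contributions efficiently with the FFT trick of Pagh~\cite{pagh_mm}; this is precisely the algorithm of Figure~\ref{alg2} and the matrix analogue of the $k$-sparse generalization of the majority algorithm from~\cite{muthu_survey}. Instead of searching for a single majority entry I hash the $n^2$ positions of the product into buckets and run the $2\ell$-bit recovery of Lemma~\ref{maj_matrix} inside every bucket in parallel, so that each of the $b$ heavy entries is reconstructed in some bucket where it is isolated.

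First I would set up the bucketing and justify the number of primes. Identify the entry $(i,j)$ with the integer $i\,2^\ell + j \in [n^2]$ and, for each prime $p$ in the array $P$, route it to bucket $(i\,2^\ell + j) \bmod p$. Two distinct heavy entries collide under $p$ exactly when $p$ divides their difference, a nonzero integer below $n^2$ and hence with at most $2\log n$ distinct prime factors. A union bound over the other $b-1$ heavy entries shows that fewer than $2b\log n$ primes are ``bad'' for a fixed heavy entry, so choosing $X = c_x b\log n$ primes with $c_x$ a small constant larger than $2$ guarantees that every one of the $b$ heaviest entries is the unique heavy entry in its bucket under at least one prime. Since the $X$th prime has size $O(X\log X) = O(b\log n\log b)$, each $p$ induces at most $Y = O(b\log n\log b)$ buckets, which fixes the dimensions of $Q$ and $G$.

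Next comes correctness of the recovery, which is where the hypothesis enters. Fix a heavy entry $e$ and a prime $p$ isolating it. Its bucket then contains $e$ together with only tail entries, whose total absolute weight is at most $\|C\|_{E^b1}$; by assumption $|C_e| > \|C\|_{E^b1}$, so $e$ is a \emph{majority} entry of that bucket in the sense of Lemma~\ref{majority_bits}. Feeding the per-bucket counters to the bit-based recovery of Lemma~\ref{maj_matrix} --- the total bucket weight $Q[j][m]$ together with the restricted weights $G[j][m][k]$ obtained by zeroing, for each bit $k$, the positions of $a$ (for $k<\ell$) or $b$ (for $k\ge \ell$) whose $k$th bit is $0$ --- therefore reconstructs the full $2\ell$-bit label of $e$, i.e. its position. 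Hence every heavy entry appears among the candidates emitted by {\sc CheckCandidates}, which produces at most one candidate per (prime, bucket) pair, i.e. $O(XY)$ candidates in total. In the second pass I compute the exact weight of each candidate by accumulating $A_{it}B_{tj}$ over the $n$ outer products and return the $b$ candidates of largest absolute value; since all $b$ true heaviest entries are among the candidates and their weights are computed exactly, the output is correct.

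The per-bucket sums themselves are computed as in Pagh's scheme: for the $i$th outer product and prime $p$ one forms $p_a = \sum_t a_t x^{tn \bmod p}$ and $p_b = \sum_t b_t x^{t \bmod p}$, multiplies them by FFT, and folds the product modulo $x^p$ so that the coefficient of $x^m$ is the weight routed to bucket $m$; the restricted counters $G[\cdot][\cdot][k]$ arise from the same computation after the bit-$k$ zeroing. Each such product costs $O(p\log p)$, and summing over the $X = O(b\log n)$ primes, the $2\ell = O(\log n)$ bit positions, and the $n$ outer products, together with the $O(n^2)$ verification pass, yields the stated time bound, while the space is dominated by the tables $Q$ and $G$ of total size $O(n + b^2\log^3 n\log b)$. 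I expect the main obstacle to be the correctness of the group-testing step: arguing simultaneously that $c_x b\log n$ prime moduli suffice to isolate every heavy entry, and that the hypothesis $|C_e| > \|C\|_{E^b1}$ is exactly what forces an isolated heavy entry to dominate all tail weight landing in its bucket, so that the majority recovery of Lemma~\ref{maj_matrix} returns its position; the remainder is routine bookkeeping of the FFT cost and of the verification pass.
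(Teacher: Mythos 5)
Your proposal follows essentially the same route as the paper: prime-modulus group testing to isolate each of the $b$ heavy entries in some bucket, the bit-by-bit majority recovery of Lemma~\ref{maj_matrix} inside each bucket with per-bucket sums computed via Pagh's FFT trick, and a second verification pass, with the hypothesis $|C_e| > \|C\|_{E^b1}$ used exactly where the paper uses it to make an isolated heavy entry a majority in its bucket. The isolation count and the time/space accounting also match the paper's, so the argument is correct and not a genuinely different approach.
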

\begin{proof}
We first describe the approach presented in~\cite{muthu_survey} for data streams over the non-strict turnstile model and then adapt it to matrix multiplication using Lemma~\ref{maj_matrix} and the approach from~\cite{pagh_mm}.

Let $P = \{p_1, p_2, \ldots, p_x\}$ be a set of consecutive prime numbers such that $b < p_1 < p_2 < \ldots < p_x$ and  $x=(b-1)\log_bN + 1$ where $N$ is the cardinality of the set of items in the input stream. Assume that at most $b$ items will have weights different from 0. For each prime number $p$ we create $p$ groups labelled as $0,1,\ldots,p-1$. Now, for a given $p$, each incoming item $i$ is distributed to exactly one of the $p$ groups by computing $i \mbox{ mod } p$. The crucial observation is that for any subset of $b$ non-zero items each one of them will be isolated in at least one group. This is true because two different items can share at most $\log_bN$ groups for different prime numbers. Otherwise the absolute value of their difference modulo $p$ is 0 for $\log_bN + 1$ distinct primes larger than $b$ which contradicts the fact that all items are smaller than $N$. Thus, for a given non-zero item $i$  each of the other $b-1$ non-zero items ``locks" at most $\log_bN$ groups for different primes. Since we have $(b-1)\log_bN + 1$ distinct primes there must exist a prime $p$ such that $i \mbox{ mod } p = r (\mbox{ mod } p)$ implies $j \text{ mod } p \neq r (\text{ mod } p)$ for all $j \in B, j \neq i$. Thus running the Majority algorithm from Lemma~\ref{majority_bits} for each group and checking the results in a second pass will reveal the $b$ non-zero items. Clearly, the algorithm is resilient to noise in the zero-valued items. %The running time is bounded by 

We now adjust the approach to matrix multiplication with sparse output. Figure~\ref{alg2} presents a pseudocode description of the algorithm. Given $n\times n$ real input matrices $A$ and $B$ and a parameter $b$ we assume we have found the $(b-1)\log_bn^2 + 1$ consecutive prime numbers larger than $b$, denote them as $\mathcal{P}$. As in the algorithm for nonnegative input matrices we iterate over $n$ outer products and consider each of them as $n^2$ distinct item arrivals. We iterate over primes $p \in \mathcal{P}$. We concentrate how {\sc ComputeGroups} works for a fixed prime $p \in \mathcal{P}$ and an outer product $ab$ for $a = A_{*,i}, b= B_{i, *}$, $i \in [n]$. We want to distribute each of the $(i, j)$ entries in $ab$ in $p$ groups depending on $i, j$ and $p$ and then run the  Majority algorithm from Lemma~\ref{majority_bits} in each group. We write each of the $n^2$ entries as $i \cdot n + j$, for $i,j \in [n]$. Since for any two $n_1, n_2 \in \mathbb{N}$ we have $(n_1 + n_2) \text{ mod } p = (n_1  \text{ mod } p +  n_2 \text{ mod } p) \text{ mod } p$ we want to compute $\sum_{i,j} a_i b_j$ such that $(i\cdot n \text{ mod } p  + j \text{ mod } p) \text{ mod } p  = \ell$ for each $\ell \in [p]$. The na\"ive solution would explicitly compute the $\text{mod } p$ value of each of the $n^2$ entries and update the corresponding group.  As observed by Pagh~\cite{pagh_mm}, however, one can treat the two vectors as polynomials of degree $p-1$: $p_a = \sum_{i=0}^{n-1} a_ix^{i\cdot n \text{ mod } p}$ and $p_b = \sum_{j=0}^{n-1} b_jx^{j \text{ mod } p}$, lines 6--7. In line 8 we multiply the two polynomials in time $O(p\log p)$ using FFT \footnote{Note, that the standard FFT algorithm assumes the highest order is a power of 2, thus we have to add $<p$ high order terms to $p_a$ and $p_b$ with coefficients equal to 0 to obtain the product $p_ap_b$.}. The product is $p_ap_b = \sum_{k=0}^{2(p-1)} c_kx^k$ where $c_k = \sum_{i,j} a_ib_{j}$ such that $(i \cdot n \text{ mod } p) + (j \text{ mod } p) = k$, thus we only have to add up the entries with the same exponents modulo $p$: ($p_ap_b \text{ mod } x^p + p_ap_b \text{ div } x^p$). In line~11 we update the global counter from Lemma~\ref{majority_bits} recording the total contribution of entries weights in the stream and store it in a matrix $Q$ in a corresponding cell.  Next, in lines 12--22, we repeat essentially the same procedure for each of $2\ell$ bits of each entry: we  nullify the respective entries in either $a$ or $b$, those with a 0 in the $k$th bit, as outlined in Lemma~\ref{maj_matrix}. Then we multiply the updated polynomials and add to the counter of the corresponding (sub)groups the newly computed coefficients, stored in a data structure $G$.  

After the first pass over the input matrices we have to construct a majority candidate for each of the $p$ subgroups of a prime $p \in \mathcal{P}$ and check whether it is indeed different from 0 in a second pass over the input matrices. In {\sc CheckCandidates} we do this by using the previously computed data structures $G$ and $Q$ from which we extract the required information as outlined in Lemma~\ref{maj_matrix}.

Correctness of the algorithm follows from the above discussion and the previous lemmas.

For the complexity analysis we need to know the cardinality of $\mathcal{P}$ and the order of the largest prime in $\mathcal{P}$. In the following we assume $b>2$. We have $|\mathcal{P}| = O(b\log_b n) = O(b \log n)$ and from the prime number theorem we observe that for the largest $p \in \mathcal{P}$ we have $p = O(b \log_b n(\log b + \log \log_b n)) = O (b \log b\log n)$. Therefore, for a given outer product we iterate over $O(b \log n)$ primes and in each iteration $O(\log n)$ times we multiply using FFT two polynomials of degree $O(b \log b\log n)$. This yields a total running time of 
$O(b^2\log^3 n \log^2 b)$. For the data structures $G$ and $Q$ we need a total space of $O(b^2\log^3 n \log b)$. {\sc CheckCandidates} needs time of $O(nb^2\log^3n \log b)$ to find the $b$ nonzero entries in the product. %\qed
\end{proof}
\\\\
The above theorem immediately implies the following result for sparse matrix products: 

\begin{corollary}
Let $A, B$ be real $n \times n$ matrices and $C = AB$ their product. If $C$ has at most $b$ nonzero entries then there exists a deterministic algorithm computing $C$ exactly in time $O(n^2 + nb^2 \log^3 n \log^2 b)$ and space $O(n + b^2 \log^3 n \log b)$ in two passes over the input. 
\end{corollary}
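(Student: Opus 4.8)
The plan is to derive this directly from the preceding Theorem by observing that the sparsity hypothesis forces the residual norm to vanish. First I would note that if $C = AB$ has at most $b$ nonzero entries, then deleting the $b$ entries of largest absolute value removes every nonzero entry, so the $b$-residual entrywise 1-norm satisfies $\|C\|_{E^b1} = 0$. Consequently the hypothesis of the Theorem --- that each of the $b$ heaviest entries has absolute weight exceeding $\|C\|_{E^b1}$ --- is met by every nonzero entry, since each such entry has strictly positive absolute weight while the threshold is $0$.

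Next I would invoke the Theorem's algorithm with parameter $b$. The core combinatorial guarantee underlying that algorithm is that, using the $(b-1)\log_b n^2 + 1$ consecutive primes, any collection of at most $b$ nonzero items can each be isolated in at least one residue group, so that the Majority subroutine of Lemma~\ref{majority_bits} recovers its exact position and weight. With $\|C\|_{E^b1} = 0$ there is no residual mass to corrupt these groups, so every nonzero entry of $C$ is isolated and detected exactly; the second pass of {\sc CheckCandidates} then verifies each candidate against the true value $C_{ij}$ and discards any spurious zero candidate. Hence the set of $b$ heaviest entries returned by the algorithm contains all nonzero entries of $C$, and since all remaining entries are $0$, this recovers $C$ exactly.

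The one subtlety I expect to need care is the boundary case where $C$ has strictly fewer than $b$ nonzero entries, so that some of the nominal ``$b$ heaviest'' entries are themselves $0$ and do not strictly exceed the threshold $\|C\|_{E^b1}=0$. The main obstacle is thus not a hard estimate but checking that the Theorem still applies in this degenerate regime. I would resolve it by observing that the isolation property and the verifying second pass depend only on there being at most $b$ nonzero items, not on there being exactly $b$; equivalently, one may apply the Theorem with the true count $b' \le b$ of nonzeros, and since both complexity bounds are monotone nondecreasing in the sparsity parameter, the stated bounds in terms of $b$ remain valid.

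Finally, the time bound $O(n^2 + nb^2\log^3 n \log^2 b)$, the space bound $O(n + b^2\log^3 n\log b)$, and the two-pass guarantee are inherited verbatim from the Theorem, which completes the argument.
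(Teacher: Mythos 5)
Your proposal is correct and follows the same route the paper intends: the paper offers no explicit proof, stating only that the corollary "immediately" follows from the theorem, and your argument supplies exactly the missing details (the residual norm $\|C\|_{E^b1}$ vanishes under the sparsity hypothesis, so the isolation-plus-verification machinery recovers all nonzero entries). Your handling of the degenerate case with fewer than $b$ nonzeros is a welcome bit of extra care that the paper glosses over.
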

%
%
%
% ZIPFIAN ARBITRARY
%
%
%
%
\subsection{Zipfian distribution.}

For the case when the absolute values of the entries in the outer product adhere to Zipfian distribution with parameter $z>1$ we obtain the following 

\begin{thmx}
Let the absolute values of the entries weights in a matrix product adhere to Zipfian distribution. Then for user-defined $s>0$ and $k>0$ there exists a deterministic algorithm detecting the $ks$ heaviest entries in the product in time $O(s(n^2 + nk^{\frac{2z}{z-1}}\log^3 n \log^2 k))$ and space $O(n + ks + k^{\frac{2z}{z-1}}\log^3 n \log k)$ in $2s$ passes over the input matrices. %we can find the $ks$ heaviest entries can be found in $2s$ passes over the input. The space is $O(n + k^{\frac{z}{z-1}} + ks)$ and the running time is $O(s(n^2 + nk^{\frac{z}{z-1}}))$.   
\end{thmx}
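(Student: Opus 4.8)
The plan is to obtain the $ks$ heaviest entries by iterating the preceding group-testing theorem $s$ times in a \emph{deflation} (peeling) fashion: in each round we recover the next $k$ heaviest entries of the product exactly, subtract them off, and repeat. Concretely, set the per-round isolation parameter to $b' = \Theta(k^{\frac{z}{z-1}})$. In round $t$ (for $t = 1, \ldots, s$) we run the two-pass algorithm of Figure~\ref{alg2} on $A, B$, but before extracting candidates we subtract from the group and sub-group counters $Q$ and $G$ the contributions of the $(t-1)k$ heavy entries already recovered in rounds $1, \ldots, t-1$. Since the preceding theorem recovers entries \emph{exactly} and the counters are linear sums of entry weights, these subtractions are exact, so the counters in round $t$ are precisely those of the residual matrix $C^{(t)}$ obtained from $C$ by zeroing its $(t-1)k$ heaviest entries; no error accumulates across rounds.

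For the per-round correctness I would argue as in Lemma~\ref{maj_matrix} and the preceding theorem, but tracking only the $k$ entries we actually want. The prime structure guarantees that any $b'$ entries of $C^{(t)}$ are pairwise isolable, so each of the top $k$ entries of $C^{(t)}$ is isolated, in some group of some prime, from the other $b'-1$ heaviest residual entries. The only contamination in that group then comes from entries of $C^{(t)}$ ranked below $b'$, whose total absolute weight is $\|C^{(t)}\|_{E^{b'}1} = \|C\|_{E^{(t-1)k+b'}1}$. Hence the Majority sub-routine of Lemma~\ref{majority_bits} recovers the entry provided its weight exceeds this residual tail, i.e. provided $|C_{(tk)}| > \|C\|_{E^{(t-1)k+b'}1}$, where $C_{(r)}$ denotes the entry of rank $r$. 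Plugging in the Zipfian weight $|C_{(tk)}| = \frac{\|C\|_{E1}}{\zeta(z)(tk)^z}$ together with the tail estimate $\|C\|_{E^{m}1} = O(m^{1-z})\frac{\|C\|_{E1}}{\zeta(z)}$ reduces the condition to $((t-1)k + b')^{z-1} > c\,(tk)^z$ for a constant $c$, which is exactly what fixing $b' = \Theta(k^{\frac{z}{z-1}})$ is designed to satisfy.

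The complexity then falls out of the preceding theorem. With $b' = \Theta(k^{\frac{z}{z-1}})$ we have $(b')^2 = \Theta(k^{\frac{2z}{z-1}})$ and $\log b' = O(\log k)$, so a single round costs time $O(n^2 + n k^{\frac{2z}{z-1}} \log^3 n \log^2 k)$ and working space $O(n + k^{\frac{2z}{z-1}} \log^3 n \log k)$ in two passes; the counter subtractions are of lower order. Summing over the $s$ rounds gives total time $O\big(s(n^2 + n k^{\frac{2z}{z-1}} \log^3 n \log^2 k)\big)$ and $2s$ passes, and storing the $ks$ recovered entries across rounds contributes the additive $O(ks)$ space term, yielding the claimed bounds.

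The step I expect to be the main obstacle is verifying the recovery condition \emph{uniformly} over the rounds. Deflation leaves an offset Zipfian tail, which is relatively flatter than a fresh Zipfian law, so in later rounds the gap between the $(tk)$-th entry and the residual tail $\|C\|_{E^{(t-1)k+b'}1}$ shrinks; making the inequality $((t-1)k+b')^{z-1} > c\,(tk)^z$ hold with a single $b' = \Theta(k^{\frac{z}{z-1}})$ is where the Zipfian tail estimates must be used carefully (the constant may absorb a factor depending on the round count). A secondary, purely implementational point is carrying out the exact deflation inside the FFT-based counters of Figure~\ref{alg2}: one must map each previously recovered entry $(i,j)$ to its group $(i n + j) \bmod p$ for every prime $p \in \mathcal{P}$ and to the appropriate bit sub-groups, subtract its known value, and check that this bookkeeping stays within the stated time and space budget.
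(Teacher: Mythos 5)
Your proposal matches the paper's proof essentially step for step: the paper likewise fixes the isolation parameter at $x \geq ck^{\frac{z}{z-1}}$ so that the rank-$k$ weight $\frac{\|C\|_{E1}}{\zeta(z)k^z}$ dominates the Zipfian tail $O(x^{1-z})\|C\|_{E1}$, and then peels off $k$ entries per two-pass round by subtracting the already-recovered weights from the group counters before extracting new candidates. The uniformity issue you flag for later rounds (the condition $((t-1)k+b')^{z-1} > c(tk)^z$ degrading with $t$) is a real subtlety that the paper's own, much terser, multi-round argument does not address either; it effectively treats $s$ as a constant absorbed elsewhere.
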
 

\begin{proof}
We run {\sc ComputeGroups} with suitably chosen parameters such that an entry with absolute weight of at least $f:=\frac{\|C\|_{E_1}}{\zeta(z)k^z}$ lands in at least one group such that the contribution from other entries is less than $f$. Thus, we have to isolate the $x$ heaviest entries in different groups such that $\frac{1}{k^z} \geq \sum_{i=x+1}^{n^2}\frac{1}{i^z}$. Since $\sum_{i=x+1}^{n^2}\frac{1}{i^z} \leq cx^{1-z}$ for some constant $c$, we need $\frac{1}{k^z} \geq cx^{1-z} \Leftrightarrow x \geq ck^{\frac{z}{z-1}}$. In a second pass we run {\sc CheckCandidates} and find the $k$ heaviest entries. 

A natural generalization for detecting the heaviest entries in $s$ passes for a user-defined $s$ works as follows. In one pass we will run {\sc ComputeGroups} and in the subsequent pass {\sc CheckCandidates}. Assume after the $2i$th pass, $1 \leq i \leq s-1$, we have found the $ik$ entries with largest absolute weight and they are stored in a set $\mathcal{S}$. Then, in the $(2i+1)$th pass we run {\sc ComputeGroups} but at the end we subtract the weights of all entries in $\mathcal{S}$ from the counters in the corresponding groups in order to guarantee that the $ik$ heaviest entries are not considered. After finding the new $k$ heaviest entries we add them to $\mathcal{S}$.  %\qed
\end{proof}

\subsection{Comparison to previous work.} The algorithm seems to be only of theoretical interest. Note that its complexity is much worse than the one achieved by Pagh's randomized one-pass algorithm \cite{pagh_mm}: the $b$ non-zero entries can be found in time $O(n^2 + nb \log n)$ and space $O(n + b\log n)$ with error probability $O(1/\text{poly}(n))$. Nevertheless since the best known space lower bound for finding $b$ non-zero elements by a deterministic algorithm is $O(b\log n)$ there seems to be potential for improvement. For example Ganguly and Majumder~\cite{detkset} present improvements of the deterministic algorithm from~\cite{muthu_survey} but their techniques are not suitable for our problem.

To the best of our knowledge this is the first deterministic algorithm for computing matrix products in time $O(n^{2 + \varepsilon})$ for the case when the product contains at most $O(\sqrt{n})$ non-zero entries. The algorithm by Iwen and Spencer achieves this for an arguably more interesting class of matrix products, namely those with  $n^{\beta}$, $\beta \leq 0.29462$, nonzero entries in each row, but the algorithm relies on fast rectangular matrix multiplication and its simple version runs in time $O(n^{2+\beta})$.  
\\\\
{\bf Acknowledgements.} I would like to thank my supervisor Rasmus Pagh for his continuous support and many valuable comments and suggestions.

\bibliographystyle{article}

\begin{thebibliography}{1}

\bibitem{ams}
N. Alon, Y. Matias, and M. Szegedy. 
\newblock The space complexity of approximating the frequency
moments. 
\newblock {\em J. Comput. Syst. Sci}, 58(1):137--147, 1999.

\bibitem{amossen_pagh_mm}
R. R. Amossen and R. Pagh. 
\newblock Faster join-projects and sparse matrix multiplications.
\newblock {\em ICDT 2009}, 121--126.

%\bibitem{bw_bmm}
%N. Bansal, R. Williams.
%\newblock Regularity Lemmas and Combinatorial Algorithms.
%\newblock {\em FOCS 2009}: 745--754

\bibitem{berinde_et_al}
R. Berinde, P.~Indyk, G. Cormode, M. J. Strauss: 
\newblock Space-optimal heavy hitters with strong error bounds. 
\newblock {\em ACM Trans. Database Syst.} 35(4): 26 (2010)

\bibitem{median}
M. Blum, R. W. Floyd, V. R. Pratt, R. L. Rivest, R. E. Tarjan.
\newblock Time Bounds for Selection.
\newblock {\em J. Comput. Syst. Sci. 7(4)}: 448--461 (1973)

\bibitem{bose_et_al}
P. Bose, E. Kranakis, P. Morin, Y. Tang. 
\newblock Bounds for Frequency Estimation of Packet Streams. 
\newblock {\em SIROCCO 2003}: 33--42

\bibitem{mjrty}
R. Boyer and S. Moore
\newblock A Fast Majority Vote Algorithm
\newblock {\em U. Texas Tech report}, 1982

\bibitem{brin_et_al}
S. Brin, R. Motwani, C. Silverstein.
\newblock Beyond Market Baskets: Generalizing Association Rules to Correlations.
\newblock {\em SIGMOD 1997}: 265--276


\bibitem{mm_history}
P. Burgisser, M. Clausen, and M. A. Shokrollahi. 
\newblock Algebraic complexity theory. 
\newblock {\em Springer-Verlag}, 1997

\bibitem{bisam}
A. Campagna and R. Pagh.
\newblock Finding associations and computing similarity via biased pair sampling.
\newblock {\em Knowl. Inf. Syst.} 31(3): 505--526 (2012)

\bibitem{count_sketch}
M. Charikar, K. Chen, and M. Farach-Colton. 
\newblock Finding frequent items in data streams. 
\newblock {\em Theor. Comput. Sci}, 312(1):3--15, 2004

\bibitem{cohen_et_al}
E. Cohen, M. Datar, S. Fujiwara, A. Gionis, P. Indyk, R. Motwani, J. D. Ullman, C. Yang.
\newblock Finding Interesting Associations without Support Pruning.
\newblock {\em IEEE Trans. Knowl. Data Eng.} 13(1): 64--78 (2001)

\bibitem{cohen_lewis_mm}
E. Cohen and D. D. Lewis. 
\newblock Approximating matrix multiplication for pattern recognition tasks. 
\newblock {\em Journal of Algorithms}, 30(2):211--252, 1999


\bibitem{copwin_mm}
D. Coppersmith and S. Winograd. 
\newblock Matrix multiplication via arithmetic progressions.
\newblock {\em Journal of Symbolic Computation}, 9(3):251--280, 1990

\bibitem{hot}
G. Cormode and S. Muthukrishnan. 
\newblock What's hot and what's not: Tracking most frequent items dynamically. 
\newblock {\em ACM Transactions on Database Systems}, 30(1):249--278, 2005. 

\bibitem{demaine_et_al}
E. D. Demaine, A. L\'opez-Ortiz, J. I. Munro. 
\newblock Frequency Estimation of Internet Packet Streams with Limited Space.
\newblock {\em ESA 2002}: 348--360

\bibitem{mcl}
S. V. Dongen. 
\newblock Graph Clustering by Flow Simulation. 
\newblock {\em PhD thesis}, University of Utrecht, 2000

\bibitem{drineas_et_al_mm}
P. Drineas, R. Kannan, and M. W. Mahoney. 
\newblock Fast Monte Carlo algorithms for matrices I: Approximating matrix multiplication. 
\newblock {\em SIAM Journal on Computing}, 36(1):132--157, 2006

\bibitem{selection_XY}
G. N. Frederickson, D. B. Johnson.
\newblock The Complexity of Selection and Ranking in X+Y and Matrices with Sorted Columns.
\newblock {\em J. Comput. Syst. Sci. 24(2)}: 197--208 (1982)

\bibitem{detkset}
S. Ganguly and A. Majumder. 
\newblock Deterministic $k$-set structure. 
\newblock {\em PODS 2006}: 280--289

\bibitem{best_sort}
Y. Han. 
\newblock Deterministic sorting in $O(n \log \log n)$ time and linear space.
\newblock {\em J. Algorithms 50(1)}: 96--105 (2004)

\bibitem{dm_book}
J. Han, M. Kamber. 
\newblock Data Mining: Concepts and Techniques 
\newblock {\em Morgan Kaufmann} 2000

\bibitem{iwen_spencer_mm}
M. A. Iwen and C. V. Spencer. 
\newblock A note on compressed sensing and the complexity of matrix multiplication. 
\newblock {\em Inf. Process. Lett}, 109(10):468--471, 2009

\bibitem{karp_et_al}
R. M. Karp, S. Shenker, C. H. Papadimitriou. 
\newblock A simple algorithm for finding frequent elements in streams and bags. 
\newblock {\em ACM Trans. Database Syst. 28}: 51--55 (2003)

\bibitem{lingas_mm}
A. Lingas. 
\newblock A fast output-sensitive algorithm for boolean matrix multiplication.  
\newblock {\em ESA 2009}, 408--419.


\bibitem{misra_gries}
J. Misra, D. Gries: 
\newblock Finding Repeated Elements. 
\newblock {\em Sci. Comput. Program.} 2(2): 143--152 (1982)

\bibitem{muthu_survey}
S. Muthukrishnan. 
\newblock Data Streams: Algorithms and Applications. 
\newblock {\em Foundations and Trends in Theoretical Computer Science}, Vol. 1, Issue 2, 2005

\bibitem{pagh_mm}
R. Pagh.
\newblock Compressed Matrix Multiplication.
\newblock {\em Proceedings of ACM Innovations in Theoretical Computer Science (ITCS)}, 2012

\bibitem{pucai_kume}
M. Ru\v zi\'c. 
\newblock Constructing Efficient Dictionaries in Close to Sorting Time. 
\newblock {\em ICALP (1) 2008}: 84--95

\bibitem{sarlos_mm}
T. Sarl\'os. 
\newblock Improved Approximation Algorithms for Large Matrices via Random Projections. 
\newblock {\em FOCS~2006}: 143--152

\bibitem{outer_prod_mm}
C.-P. Schnorr, C. R. Subramanian.
\newblock Almost Optimal (on the average) Combinatorial Algorithms for Boolean Matrix Product Witnesses, Computing the Diameter. 
\newblock {\em RANDOM 1998}: 218--231

\bibitem{stothers_mm}
A. J. Stothers. 
\newblock On the complexity of matrix multiplication. 
\newblock {\em Ph.D. thesis}, University of Edinburgh, 2010

\bibitem{strassen_mm}
V. Strassen. 
\newblock Gaussian Elimination is not Optimal. 
\newblock {\em Numer. Math.} 13, 354--356, 1969

\bibitem{yuster_zwick_mm} 
R. Yuster and U. Zwick. 
\newblock Fast sparse matrix multiplication. 
\newblock {\em ACM Transactions on Algorithms}, 1(1):2--13, 2005.

\bibitem{virgi_mm}
V. Vassilevska Williams.
\newblock  Multiplying matrices faster than Coppersmith-Winograd.
\newblock {\em STOC 2012}, 887--898 %Available at http://www.cs.berkeley.edu/~virgi/matrixmult.pdf.

\end{thebibliography}

\end{document}